\documentclass[runningheads]{llncs}

\usepackage[utf8]{inputenc}
\usepackage{amsmath}
\usepackage{amssymb}
\usepackage{graphicx}
\usepackage{tabularx}
\usepackage{array}
\usepackage{pifont}
\usepackage{xspace}
\usepackage{todonotes}
\usepackage{paralist}
\usepackage{thm-restate}
\usepackage{subfigure}
\usepackage{complexity}
\usepackage[pdfpagelabels,colorlinks,citecolor=blue,linkcolor=blue,urlcolor=blue]{hyperref}
\usepackage[english]{babel}
\usepackage{amsopn}
\usepackage{latexsym}
\usepackage{multirow}
\usepackage{multicol}
\usepackage{booktabs}
\usepackage[capitalise]{cleveref}
\usepackage{color, colortbl}
\usepackage{booktabs}

\renewcommand{\orcidID}[1]{\href{https://orcid.org/#1}{\includegraphics[scale=.03]{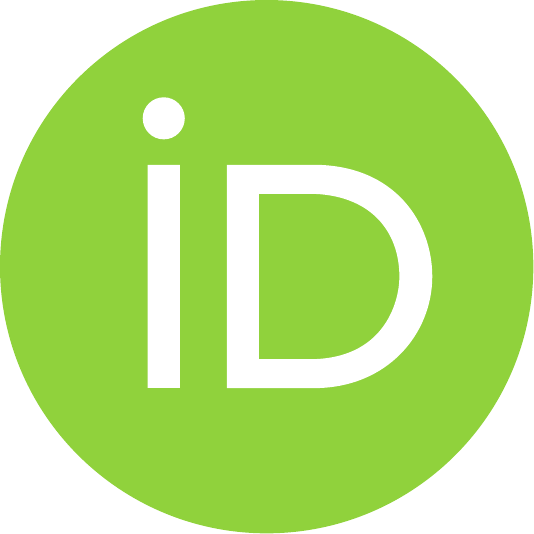}}}




\spnewtheorem{clm}{Claim}{\bfseries}{\rmfamily}




\definecolor{lightcyan}{rgb}{0.88,1,1}
\definecolor{antiquewhite}{rgb}{0.98, 0.92, 0.84}
\DeclareMathOperator\indeg{indeg}
\DeclareMathOperator\outdeg{outdeg}

\newcounter{casecounter}
\newcounter{subcasecounter}
\newcounter{subsubcasecounter}
\makeatletter
\newcommand{\ccase}[2][]{%
	\~counter{casecounter}%
	\setcounter{subcasecounter}{0}%
	\protected@write \@auxout {}{\string \newlabel {#2}{{#1\thecasecounter}{\thepage}{#1\thecasecounter}{#2}{}} }%
	\hypertarget{#2}{\noindent\textbf{Case #1\thecasecounter.}}
}

\newcommand{\subcase}[2][]{%
	\`counter{subcasecounter}%
	\setcounter{subsubcasecounter}{0}%
	\protected@write \@auxout {}{\string \newlabel {#2}{{#1\thecasecounter.\thesubcasecounter}{\thepage}{#1\thecasecounter.\thesubcasecounter}{#2}{}} }%
	\hypertarget{#2}{\noindent\textbf{Case #1\thecasecounter.\thesubcasecounter.}}
}

\newcommand{\subsubcase}[2][]{%
	\stepcounter{subsubcasecounter}%
	\protected@write \@auxout {}{\string \newlabel {#2}{{#1\thecasecounter.\thesubcasecounter.\thesubsubcasecounter}{\thepage}{#1\thecasecounter.\thesubcasecounter.\thesubsubcasecounter}{#2}{}} }%
	\hypertarget{#2}{\noindent\textbf{Case #1\thecasecounter.\thesubcasecounter.\thesubsubcasecounter.}}
}
\makeatother

\newcommand{\skel}{\mathrm{skel}\xspace}

\definecolor{lightgreen}{rgb}{0.6, 0.98, 0.6}

\newcommand{\ortho}{\textsc{$b$-Orthogonal Planarity}\xspace}
\newcommand{\rpt}{\textsc{Rectlinear Planarity}\xspace}



\begin{document}
	\title{On the Parameterized Complexity of Bend-Minimum Orthogonal Planarity \thanks{Research partially supported by: (i) University of Perugia, Ricerca Base 2021, Proj. ``AIDMIX — Artificial Intelligence for
			Decision Making: Methods for Interpretability and eXplainability''; (ii) MUR PRIN Proj. 2022TS4Y3N - ``EXPAND: scalable algorithms for EXPloratory Analyses of heterogeneous and dynamic Networked Data''; (iii) MUR PRIN Proj. 2022ME9Z78 - ``NextGRAAL: Next-generation algorithms for constrained GRAph visuALization''}}
	 \author{Emilio Di Giacomo\inst{1}\orcidID{0000-0002-9794-1928},Walter Didimo\inst{1}\orcidID{0000-0002-4379-6059},
     Giuseppe Liotta\inst{1}\orcidID{0000-0002-2886-9694},
     \\Fabrizio Montecchiani\inst{1}\orcidID{0000-0002-0543-8912},
    Giacomo Ortali\inst{1}\orcidID{0000-0002-4481-698X}$^c$
	 }
	
	\date{}
	
	 \institute{
		Universit\`a degli Studi di Perugia, Italy\\
		\email {\{name.surname\}@unipg.it}
	 }

	\maketitle
	
	%
\begin{abstract}
Computing planar orthogonal drawings with the minimum number of bends is one of the most relevant topics in Graph Drawing. The problem is known to be \NP-hard, even when we want to test the existence of a rectilinear planar drawing, i.e., an orthogonal drawing without bends (Garg and Tamassia, 2001). From the parameterized complexity perspective, the problem is fixed-parameter tractable when parameterized by the sum of three parameters: the number of bends, the number of vertices of degree at most two, and the treewidth of the input graph (Di Giacomo et al., 2022). We improve this last result by showing that the problem remains fixed-parameter tractable  when parameterized only by the number of vertices of degree at most two plus the number of bends. As a consequence, rectilinear planarity testing lies in \FPT~parameterized by the number of  vertices of degree at most two.

\keywords{Orthogonal drawings, bend-minimization, parameterized complexity}
\end{abstract}

\section{Introduction}
Orthogonal drawings of planar graphs represent a foundational graph drawing paradigm~\cite{DBLP:books/ph/BattistaETT99,DBLP:journals/siamcomp/BattistaLV98,DBLP:journals/jgaa/RahmanNN99} and an active field of research~\cite{DBLP:journals/jcss/GiacomoLM22,DBLP:conf/gd/DidimoKLO22,DBLP:conf/soda/DidimoLOP20,DBLP:journals/comgeo/Frati22}. Given a planar 4-graph $G$, i.e., a planar graph of vertex-degree at most 4, a \emph{planar orthogonal drawing} $\Gamma$ of $G$ is a planar drawing that maps the vertices of $G$ to distinct points of the plane and each edge of $G$ to a chain of horizontal and vertical segments connecting the two endpoints of the edge. A \emph{bend} of $\Gamma$ is the meeting point of two consecutive segments along the same edge. Given a graph $G$ and $b \in \mathbb{N}$, the \ortho problem asks for the existence of an orthogonal drawing of $G$ with at most $b$ bends in total. While every $n$-vertex planar 4-graph admits an orthogonal drawing with $O(n)$ bends~\cite{DBLP:journals/comgeo/BiedlK98},  \ortho  is well-known to be \NP-complete already when $b=0$~\cite{DBLP:journals/siamcomp/GargT01}. In this case, the \ortho problem is also known in the literature as \rpt. It is even \NP-hard to approximate the minimum number of bends in an orthogonal drawing with an $O(n^{1-\varepsilon})$ error for any $\varepsilon > 0$~\cite{DBLP:journals/siamcomp/GargT01}.

From a parameterized complexity perspective, the hardness of \rpt rules out the possibility of finding a tractable solution for  \ortho  parameterized solely by the number of bends. On the positive side, a recent result of Di Giacomo et al.~\cite{DBLP:journals/jcss/GiacomoLM22} shows that \ortho lies in the \textsf{XP} class when parameterized by the treewidth $\mathsf{tw}$ of the input graph. In fact, the result in~\cite{DBLP:journals/jcss/GiacomoLM22} is based on a fixed-parameter tractable algorithm in the parameter $b+k+\mathsf{tw}$, where $k$ is the number of vertices of degree at most two. A natural question is thus whether a subset of these three parameters suffices to establish tractability, as asked in~\cite{DBLP:journals/jcss/GiacomoLM22}. We also remark that, in a recent Dagstuhl seminar on parameterized complexity in graph drawing~\cite[pag. 94]{DBLP:journals/dagstuhl-reports/GanianMNZ21},  understanding whether \rpt is in \FPT~parameterized by only one of $k$ and $\mathsf{tw}$ was identified as a~prominent~challenge. 

\medskip\noindent\textbf{Contribution.} 
We address the questions posed in~\cite{DBLP:journals/jcss/GiacomoLM22,DBLP:journals/dagstuhl-reports/GanianMNZ21} and prove that \ortho is fixed-parameter tractable parameterized by $b+k$, hence dropping the dependency on $\mathsf{tw}$. Consequently, \rpt is in \FPT~when parameterized by $k$ alone.

From a technical point of view, the algorithm by Di Giacomo et al.~\cite{DBLP:journals/jcss/GiacomoLM22} exploits treewidth to apply dynamic programming. The main crux of that algorithm lies on two main ingredients. First, a fixed treewidth (and fixed vertex-degree) bounds the number of interesting embeddings in which different faces have vertices in the current bag. Second, a classic result by Tamassia~\cite{DBLP:journals/siamcomp/Tamassia87} states that the existence of an orthogonal drawing is equivalent to the existence of a combinatorial representation describing the angles at the vertices and the bends along the edges. In this respect, a fixed number of bends and of vertices of degree at most two bounds the possible combinatorial descriptions of a face, and hence makes the definition of small records feasible for the sake of dynamic programming. To avoid the dependency on treewidth, in our work we exploit additional insight into the structure of the problem. First, we adopt SPQ$^*$R-trees to decompose the input graph and to keep track of its possible embeddings. While it was already known how to deal with S- and P-nodes efficiently by exploiting dynamic programming on SPQ$^*$R-trees~\cite{DBLP:journals/siamcomp/BattistaLV98,Didimo2023}, the main obstacle is represented by R-nodes, i.e., the rigid components of the graph. To overcome this obstacle, we first prove that the number of children of an R-node depends on the number of degree-2 vertices plus the number of bends. Then, we 
carefully combine flow-network based techniques~\cite{DBLP:journals/siamcomp/Tamassia87} for planar embedded graphs with the notion of orthogonal spirality~\cite{DBLP:journals/siamcomp/BattistaLV98}, which measures how much a component of an orthogonal drawing is rolled-up. 

The most natural question that remains open is to settle the complexity of \ortho parameterized  by $k$ alone, which we conjecture to be at least W[1]-hard due to the fact that bends and low-degree vertices have interchangeable behaviours in terms of spirality.


\section{Preliminaries}\label{se:preliminaries}
We only consider planar 4-graphs, i.e., graphs with vertex-degree at most 4. A \emph{plane graph} is a planar graph with a fixed planar embedding, i.e., prescribed clockwise orderings of the edges around the vertices and a given external face. 

\smallskip\noindent{\bf Orthogonal  Representations.} 
A \emph{planar orthogonal representation}~$H$ of a planar graph $G$ describes the shape of a class of orthogonal drawings in terms of left/right bends along the edges and angles at the vertices. A drawing $\Gamma$ of~$H$ (i.e., a planar orthogonal drawing of $G$ that preserves $H$) can be computed in linear time~\cite{DBLP:journals/siamcomp/Tamassia87}. If~$H$ has no bends, it is a \emph{rectilinear representation}. Since we only deal with planar drawings, we just use the term rectilinear (orthogonal) representation in place of planar rectilinear (orthogonal) representation.


Let $G$ be a plane graph with $n$ vertices. In \cite{DBLP:journals/siamcomp/Tamassia87}, it is proved that a bend-minimum orthogonal representation of $G$ can be constructed in $O(n^2 \log n)$ time by finding a min-cost flow on a suitable flow-network constructed from the planar embedding of $G$.  
The feasible flows on this network correspond to the possible orthogonal representations of the graph, and the total cost of the flow equals the number of bends of the corresponding representation. 
Fixing the values of the flow along some arcs of the network, one can force desired vertex-angles or left/right bends along the edges in the corresponding orthogonal representation. 
The time complexity in \cite{DBLP:journals/siamcomp/Tamassia87} has been subsequently improved to $O(n^\frac{7}{4}\sqrt{\log n})$ \cite{DBLP:conf/gd/GargT96a} time, and later to $O(n^\frac{3}{2})$\cite{DBLP:journals/jgaa/CornelsenK12} if there is no constraints on the orthogonal representation and one can use a flow-network whose arcs all have infinite capacities. 
 
Let $H'$ be an orthogonal representation of a (not necessarily connected) subgraph $G'$ of $G$. An orthogonal representation $H$ of $G$ is \emph{$H'$-constrained} if its restriction to $G'$ equals $H'$. By the considerations above, the following holds. 

\begin{lemma}
\label{le:planebendmin}
Let $G$ be an $n$-vertex plane 4-graph and let $H'$ be an orthogonal representation of a subgraph $G'$ of $G$. There exists an $O(n^\frac{7}{4}\sqrt{\log n})$-time algorithm that computes an $H'$-constrained orthogonal representation of $G$ with the minimum number of bends, if it exists, or that rejects the instance otherwise. 
\end{lemma}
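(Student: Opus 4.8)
The plan is to reduce the constrained problem to the unconstrained min-cost flow formulation of Tamassia, modifying the flow-network so that the restriction-to-$G'$ condition is enforced by fixing flow values on a suitable set of arcs. Recall from the discussion preceding the statement that Tamassia's network has the property that the feasible flows are in bijection with the orthogonal representations of the plane graph, the cost of a flow equals the number of bends, and fixing the flow along certain arcs forces prescribed vertex-angles and left/right bends. Hence the first step is to encode the data carried by $H'$ --- namely the angles at the vertices of $G'$ and the sequence of left/right bends along each edge of $G'$ --- as equality constraints on the corresponding arcs of the network built from the fixed planar embedding of $G$.

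First I would build the Tamassia flow-network $N$ for the plane graph $G$ in $O(n)$ time. Then, for every edge $e \in E(G')$, I would fix the flow on the arc(s) of $N$ that govern the bends of $e$ so as to reproduce exactly the left/right bend sequence prescribed by $H'$; and for every vertex $v$ of $G'$ and every pair of consecutive edges of $G'$ around $v$, I would fix the flow on the arc(s) encoding that vertex-angle to the value dictated by $H'$. Fixing an arc to value $x$ is done in the standard way by setting both its lower and upper capacity to $x$ (equivalently, by splitting off a mandatory unit of flow), which preserves the integrality and planarity of the network and changes neither its size nor the cost semantics. After these modifications, every feasible flow corresponds precisely to an orthogonal representation $H$ of $G$ whose restriction to $G'$ equals $H'$, i.e.\ to an $H'$-constrained representation, and the flow of minimum cost yields the one with the fewest bends; if the resulting flow problem is infeasible, then no $H'$-constrained representation exists and we reject.

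The running time then follows by invoking the improved min-cost flow algorithm of Garg and Tamassia~\cite{DBLP:conf/gd/GargT96a}, which solves the resulting instance in $O(n^{\frac{7}{4}}\sqrt{\log n})$ time; note that the faster $O(n^{3/2})$ bound of~\cite{DBLP:journals/jgaa/CornelsenK12} does not apply here because fixing arcs to reproduce $H'$ introduces finite (indeed, tight) arc capacities, which is exactly the reason the statement claims only the weaker bound. A small subtlety is that $G'$ need not be connected and need not be an induced subgraph; since the constraints above are purely local (per edge and per incident-edge-pair at a vertex), this causes no difficulty, and the faces of $G'$ simply correspond to unions of faces of $G$ whose shared boundary is left unconstrained.

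The main obstacle I anticipate is the bookkeeping that establishes the bijection is exact: one must verify that the arc-fixing faithfully reproduces \emph{every} piece of information in $H'$ (all angles at all vertices of $G'$, including reflex and straight angles, and the full left/right bend string on each edge of $G'$) and, conversely, that no spurious constraints are imposed on the part of $G$ outside $G'$, so that the set of feasible flows is neither too small nor too large. This is a routine but careful translation between the combinatorial description of an orthogonal representation and the flow variables of Tamassia's network, and once it is checked, the correctness and the time bound are immediate from Tamassia's correspondence and the cited flow algorithm.
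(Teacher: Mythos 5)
Your proposal is correct and follows essentially the same route the paper takes: the lemma is justified there by exactly the observations you use, namely that Tamassia's flow-network puts orthogonal representations in cost-preserving correspondence with feasible flows, that fixing flow on arcs enforces prescribed angles and bends (here, those of $H'$), and that the resulting constrained min-cost flow instance is solved in $O(n^{\frac{7}{4}}\sqrt{\log n})$ time by Garg and Tamassia. Your remark on why the $O(n^{3/2})$ bound of Cornelsen and Karrenbauer does not apply matches the paper's own caveat.
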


\smallskip\noindent{\bf SPQR-trees.}
Let $G$ be a biconnected planar graph. The \emph{SPQR-tree} $T$ of $G$, introduced in~\cite{DBLP:books/ph/BattistaETT99}, represents the decomposition of $G$ into its triconnected components~\cite{DBLP:journals/siamcomp/HopcroftT73}.  
Each triconnected component corresponds to a non-leaf node $\nu$ of~$T$; the triconnected component itself is called the \emph{skeleton} of $\nu$ and is denoted as $\skel(\nu)$. Node $\nu$ can be: $(i)$ an \emph{R-node}, if $\skel(\nu)$ is a triconnected graph; $(ii)$ an \emph{S-node}, if $\skel(\nu)$ is a simple cycle of length at least three; $(iii)$ a \emph{P-node}, if $\skel(\nu)$ is a bundle of at least three parallel edges.
A degree-1 node of $T$ is a \emph{Q-node} and represents a single edge of~$G$.
A \emph{real edge} (resp. \emph{virtual edge})  in $\skel(\nu)$ corresponds to a Q-node (resp., to an S-, P-, or R-node) adjacent to $\nu$ in $T$.
Neither two S- nor two P-nodes are adjacent in~$T$. The SPQR-tree of a biconnected graph can be computed in linear time~\cite{DBLP:books/ph/BattistaETT99,DBLP:conf/gd/GutwengerM00}.

Let $e$ be a designated edge of $G$, called the \emph{reference edge} of $G$, let $\rho$ be the Q-node of $T$ corresponding to $e$, and let $T$ be rooted at $\rho$. 
For any P-, S-, or R-node $\nu$ of $T$, $\skel(\nu)$ has a virtual edge, called \emph{reference edge} of $\skel(\nu)$ and of $\nu$, associated with a virtual edge in the skeleton of its parent. The reference edge of the root child of $T$ is the edge corresponding to $\rho$.
The rooted tree $T$ describes all planar embeddings of $G$ with its reference edge on the external face; they are obtained by combining the different planar embeddings of the skeletons of P- and R-nodes with their reference edges on the external face. Namely, for a P- or R-node $\nu$, denote by $\skel^-(\nu)$ the skeleton of $\nu$ without its reference edge.
If $\nu$ is a P-node, the embeddings of $\skel(\nu)$ are the different permutations of the edges of $\skel^-(\nu)$; If $\nu$ is an R-node, $\skel(\nu)$ has two possible embeddings, obtained by flipping $\skel^-(\nu)$  at its poles.
For every node $\nu \neq \rho$, the \emph{pertinent graph $G_\nu$ of $\nu$} is the subgraph of $G$ whose edges correspond to the Q-nodes in the subtree of $T$ rooted at $\nu$.  
We also say that $G_\nu$ is a \emph{component of $G$}.
The pertinent graph $G_\rho$ of the root $\rho$ coincides with the reference edge of $G$.
If $H$ is an orthogonal representation of $G$, its restriction $H_\nu$ to $G_\nu$ is an \emph{orthogonal component} of $H$.

\section{The Biconnected Case}
\label{se:bico}
In this section we assume that $G$ is a biconnected graph with $k$ degree-2 vertices. Similar to previous works~\cite{DBLP:journals/siamcomp/BattistaLV98,DBLP:conf/gd/DidimoKLO22,Didimo2023,DBLP:conf/isaac/DidimoL98,DBLP:journals/jcss/DidimoLP19}, our approach exploits a variant of SPQR-tree, called \emph{SPQ$^*$R-tree}, and the notion of \emph{spirality} for orthogonal components. We recall below these two concepts; see \Cref{fi:spqr-spirality} for an illustration.

\begin{figure}[t]
	\centering
    \subfigure[]{\includegraphics[height=0.4\columnwidth,page=1]{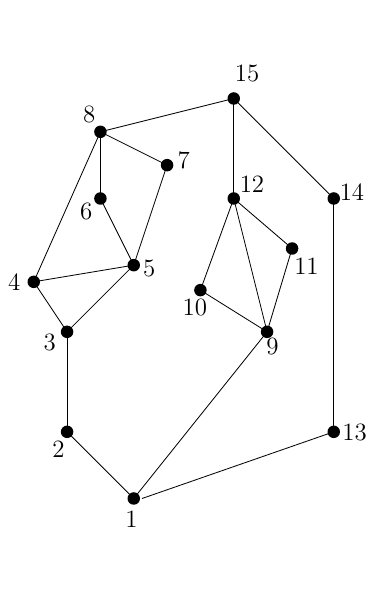}\label{fi:spqr-spirality-a}}
    \hfil
    \subfigure[]{\includegraphics[height=0.4\columnwidth,page=2]{spqr-spirality}\label{fi:spqr-spirality-b}}
	\caption{(a) A biconnected planar graph $G$. (b) The SPQ$^*$R-tree of $G$ rooted at the Q$^*$-node of the chain (1,13,14,15). The skeletons of three nodes are shown; the virtual edges are dashed and the reference edge is thicker.}
 \label{fi:spqr-spirality}
\end{figure}

\smallskip\noindent{\bf SPQ$^*$R-trees.}
Assume that $G$ is not a simple cycle (otherwise computing a bend-minimum orthogonal representation of $G$ is trivial). In an SPQ$^*$R-tree of $G$ we do not distinguish between Q-nodes and series of edges, and represent both of them with a type of node called \emph{Q$^*$-node}. More precisely, each degree-1 node of $T$ is a Q$^*$-node that corresponds to a maximal chain of edges of $G$ (possibly a single edge) starting and ending at vertices of degree larger than two and passing through a sequence of degree-2 vertices only (possibly none). If~$\nu$ is an S-, a P-, or an R-node, and if $\mu$ is a Q$^*$-node child of $\nu$, the edge of $\skel(\nu)$ corresponding to $\mu$ is virtual if $\mu$ corresponds to a chain of at least two edges, else it is a real edge.
If $T$ is rooted at a certain Q$^*$-node $\rho$, the chain corresponding to $\rho$ is the \emph{reference chain of $G$}. The definitions of pertinent graphs (or components) for the nodes of a rooted SPQ$^*$R-tree extend naturally from those of a rooted SPQR-tree. In particular $G_\rho$ is the pertinent graph of $\rho$, i.e., the reference chain of $G$.
Moreover, as in~\cite{DBLP:conf/compgeom/ChaplickGFGRS22,DBLP:journals/siamcomp/BattistaLV98,DBLP:journals/siamdm/DidimoGL09,DBLP:conf/gd/DidimoKLO22,Didimo2023,DBLP:conf/isaac/DidimoL98,DBLP:journals/jcss/DidimoLP19}, we assume to work with a \emph{normalized} SPQ$^*$R-tree, in which every S-node has exactly two children. Note that every SPQ$^*$R-tree can be easily transformed into a normalized SPQ$^*$R-tree by recursively splitting an S-node with more than two children into multiple S-nodes with two children. Observe that in this case an S-node may have an S-node as a child. If $G$ has $n$ vertices, a normalized SPQ$^*$R-tree of $G$ still has $O(n)$ nodes and can be easily computed in $O(n)$ time from an SPQ$^*$R-tree of $G$.  

\smallskip\noindent{\bf Orthogonal Spirality.} Let $H$ be an orthogonal representation of~$G$.
If $P^{uv}$ is a path from a vertex $u$ to a vertex $v$ in $H$, the \emph{turn number of $P^{uv}$}, denoted by $n(P^{uv})$, is the number of right turns minus the number of left turns along $P^{uv}$, while moving from $u$ to $v$. Note that a turn can occur at either a bend or a vertex.
Let $T$ be a rooted normalized SPQ$^*$R-tree of~$G$ and let $\nu$ be a node of~$T$. Let $H_\nu$ be the restriction of $H$ to $G_\nu$, and denote by $\{u,v\}$  the poles of $\nu$, conventionally ordered according to an $st$-numbering of $G$, where $s$ and $t$ are the poles of the root of $T$.
For each $w \in \{u,v\}$, let  $\indeg_\nu(w)$ and $\outdeg_\nu(w)$ be the degree of $w$ inside and outside $H_\nu$, respectively. Define two, possibly coincident, \emph{alias vertices} of $w$, denoted by $w'$ and $w''$, as follows:
$(i)$ if $\indeg_\nu(w)=1$, then $w'=w''=w$;
$(ii)$ if $\indeg_\nu(w)=\outdeg_\nu(w)=2$, then $w'$ and $w''$ are dummy vertices, each splitting one of the two distinct edge segments incident to~$w$ outside~$H_{\nu}$;
$(iii)$ if $\indeg_\nu(w)>1$ and $\outdeg_\nu(w)=1$, then $w'=w''$ is a dummy vertex that splits the edge segment incident to $w$ outside~$H_{\nu}$.

Let $A^w$ be the set of distinct alias vertices of a pole $w$. Let $P^{uv}$ be any simple path from~$u$ to~$v$ in $H_\nu$, and let~$u'$ and~$v'$ be alias vertices of~$u$ and~$v$, respectively. The path $S^{u'v'}$ obtained by concatenating $(u',u)$, $P^{uv}$, and $(v,v')$ is a \emph{spine} of $H_\nu$. 
The \emph{spirality} $\sigma(H_\nu)$ of $H_\nu$, introduced in~\cite{DBLP:journals/siamcomp/BattistaLV98}, is either an integer or a semi-integer number, defined based on the following cases:
%
$(i)$ if $A^u=\{u'\}$ and $A^v=\{v'\}$ then $\sigma(H_\nu) = n(S^{u'v'})$;
$(ii)$ if $A^u=\{u'\}$ and $A^v=\{v',v''\}$ then $\sigma(H_\nu) = \frac{1}{2} \big (n(S^{u'v'}) + n(S^{u'v''}) \big )$;
$(iii)$ if $A^u=\{u',u''\}$ and $A^v=\{v'\}$ then $\sigma(H_\nu) = \frac{1}{2} \big ( n(S^{u'v'}) + n(S^{u''v'})\big )$;
$(iv)$ if $A^u=\{u',u''\}$ and $A^v=\{v',v''\}$, assume w.l.o.g. that $(u,u')$ is the edge before $(u,u'')$ counterclockwise around $u$ and that $(v,v')$ is the edge before $(v,v'')$ clockwise around $v$; then $\sigma(H_\nu) = \frac{1}{2} \big (n(S^{u'v'}) + n(S^{u''v''})\big )$.

\Cref{fi:spirexaple} shows the graph $G$ of \Cref{fi:spqr-spirality-a}, two P-components $G_{\mu_1}$ and $G_{\mu_2}$ highlighted, and an orthogonal representation $H$ of $G$. The orthogonal components $H_{\mu_1}$ and $H_{\mu_2}$ in $H$ have respectively spiralities $\sigma_{\mu_1}=-1$, see Case~(iv), and  $\sigma_{\mu_2}=0$, see Case~(i). \Cref{fi:ftp-root-a} shows an orthogonal S-component with spirality $\frac{3}{2}$, see Case~(ii). 

\begin{figure}[t]
	\centering
  \includegraphics[height=0.42\columnwidth,page=3]{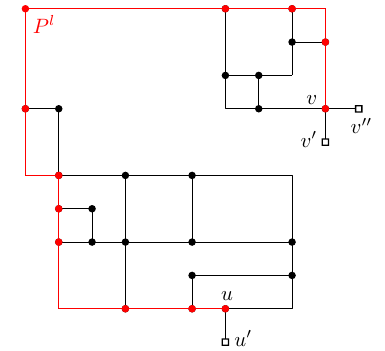}  
	\caption{Illustration of the concept of spirality.}
 \label{fi:spirexaple}
\end{figure}

\smallskip
Note that, the spirality of $H_\nu$ does not vary if we choose a different path~$P^{uv}$. Also, it is proved in~\cite{DBLP:journals/siamcomp/BattistaLV98} that a component $H_\nu$ of~$H$ can always be substituted by any other representation $H'_\nu$ of $G_\nu$ with the same spirality, getting a new valid orthogonal representation with the same set of bends on the edges of $H$ that are not in $H_\nu$ (see~\cite{DBLP:journals/siamcomp/BattistaLV98} and also Theorem~1 in \cite{Didimo2023}). For brevity, in the remainder of the paper we often denote by $\sigma_\nu$ the spirality of a component $H_\nu$.

\medskip\noindent{\bf Testing algorithm.}
Let $T$ be a rooted normalized SPQ$^*$R-tree of the input graph~$G$, and let $b$ be a non-negative integer. 
An orthogonal representation of~$G$ (resp. of a component $G_\nu$ of~$G$) with at most~$b$ bends is a \emph{$b$-orthogonal representation} of~$G$ (resp. of~$G_\nu$). 
\cref{th:biconnected} will prove that testing if $G$ admits a $b$-orthogonal representation is FPT parametrized by $k+b$. We start with a key property.

\begin{lemma}
\label{le:spirbounded}
Let $\nu$ be a node of $T$ and $H$ be a $b$-orthogonal representation of $G$. The spirality $\sigma_\nu$ of the restriction $H_\nu$ of $H$ to $G_\nu$ belongs to $[-k-b-2, k+b+2]$.
\end{lemma}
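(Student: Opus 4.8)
The plan is to bound the turn number of a spine of $H_\nu$ directly, exploiting the invariance of $\sigma(H_\nu)$ under the choice of the internal path $P^{uv}$. Concretely, I would take $P^{uv}$ to be one of the two paths into which the poles $u,v$ split the boundary of the outer face $f_\nu$ of $H_\nu$, and traverse it from $u$ to $v$ with the interior of $H_\nu$ on its left. The two stub edges $(u',u)$ and $(v,v')$ and the angles they form at the poles contribute only a bounded amount (each such turn lies in $\{-1,0,1\}$), so it suffices to show that the turn number $n(P^{uv})$ of this boundary path lies in $[-(k+b),\,k+b]$; the extra $\pm 2$ in the statement then accounts for the two pole contributions. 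Cases $(i)$–$(iv)$ in the definition of $\sigma(H_\nu)$ only average turn numbers of such spines, so the same bound is inherited by $\sigma_\nu$.

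The structural heart of the argument is the following sign observation. Traversing $\partial f_\nu$ with the interior on the left, the turn performed at a (non-pole) vertex $w$ equals $(\alpha_w-180^\circ)/90^\circ$, where $\alpha_w$ is the interior angle at $w$, i.e. $360^\circ$ minus the sector that $f_\nu$ occupies at $w$. In an orthogonal representation every angular sector is at least $90^\circ$, so if $w$ has degree $d$ and $f_\nu$ appears once around $w$, that sector is at most $360^\circ-(d-1)\cdot 90^\circ$; hence for $d\ge 3$ the interior angle $\alpha_w$ is at least $180^\circ$ and the turn at $w$ is non-negative. Consequently a left turn (a $-1$ contribution) can occur only at a degree-2 vertex or at a bend. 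Since there are at most $k$ degree-2 vertices and at most $b$ bends in all of $H$, the total number $L$ of left turns along the boundary walk of $f_\nu$ is at most $k+b$.

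Finally I would invoke the standard fact that the turns around a single face of an orthogonal representation sum to $\pm 4$ (up to the turns at the two poles, where $f_\nu$ meets the rest of $G$, which are again bounded). Writing this sum as $R-L$, with $R$ the number of right turns, gives $R=L+O(1)$, whence $R\le k+b+O(1)$ as well. Since the right and left turns of the single boundary path $P^{uv}$ are subsets of those counted in $R$ and $L$, we get $n(P^{uv})\in[-(k+b),\,k+b]$, and combining with the bounded pole and stub contributions yields $\sigma_\nu\in[-(k+b+2),\,k+b+2]$. The degenerate case in which $G_\nu$ is a single chain, i.e. a Q$^*$-node, is immediate, since its internal vertices all have degree two.

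The main obstacle I anticipate is precisely the sign observation of the second paragraph: a priori, turns at the many high-degree internal vertices of $H_\nu$ could accumulate and push the turn number up to the order of the number of vertices, so the whole bound rests on showing that such vertices never produce left turns along the outer boundary and are therefore controlled, through the global $\pm 4$ conservation, by the degree-2 vertices and bends alone. Secondary care is needed in formalizing $\partial f_\nu$ as two paths between the poles and in bounding the pole contributions to obtain the precise constant $+2$, in particular when $G_\nu$ is not biconnected and a vertex may occur more than once on the boundary walk; but these are routine once the sign observation is established.
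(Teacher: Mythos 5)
Your structural heart is exactly the paper's: a reflex ($270^\circ$) angle on the outer face of $H_\nu$ forces the incident feature to be a degree-$2$ vertex or a bend, because a degree-$d$ vertex with $d\ge 3$ leaves at most $360^\circ-(d-1)\cdot 90^\circ\le 180^\circ$ for the external sector. The paper phrases this as ``each right turn of the clockwise boundary path $P^l$ is a $270^\circ$ angle in the external face, hence a degree-$2$ vertex or a bend,'' which is your sign observation read in the opposite traversal direction. Where you diverge is in converting this one-sided count into a two-sided bound on the turn number: the paper simply uses the clockwise boundary path for the upper bound ($\sigma_\nu\le n(P^l)+2\le k+b+2$, since $n(P^l)=R-L\le R\le k+b$) and invokes symmetry for the lower bound, whereas you fix a single traversal, bound the left turns by $k+b$, and then transfer the bound to the right turns via the $\pm 4$ turn-sum of the face. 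That detour is sound but costs you the additive constant: from $R=L\pm 4$ you get $R\le k+b+4$, not $R\le k+b$, so your chain yields $\sigma_\nu\in[-(k+b+6),\,k+b+6]$ rather than the stated $[-(k+b+2),\,k+b+2]$ --- note that you write $R\le k+b+O(1)$ and then silently drop the $O(1)$ when concluding $n(P^{uv})\in[-(k+b),k+b]$. This is harmless for the FPT bound downstream, but to recover the lemma's exact constants you should do what the paper does: exploit the path-independence of spirality to apply the one-sided reflex-angle count to each of the two boundary paths separately (one per sign), which makes the $\pm 4$ identity unnecessary. Your handling of the pole/stub contributions ($\pm 2$) and of the degenerate and non-simple-boundary cases matches the paper's level of detail.
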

\begin{proof}
We consider the case $\sigma_\nu>0$ and prove that $\sigma_\nu \le k+b+2$. When $\sigma_\nu$ is negative, the proof that $-k-b-2 \le \sigma_\nu$ is symmetric. If $ 0 \leq \sigma_\nu \le 2$, the lemma is trivially true. Assume now that $\sigma_\nu > 2$. Denote by ${u,v}$ the two poles of $\nu$. The boundary of $H_\nu$ can be split into two paths from $u$ to $v$, which we call the left path $P^l$ and the right path $P^r$ of $H_\nu$. Namely, $P^l$ (resp. $P^r$) is the path from $u$ to $v$ while walking clockwise (resp. counterclockwise) on the external boundary of $H_\nu$. Note that, these paths may share some edges.
By the definition of spirality, we have that $\sigma_\nu \leq n(P^l) + 2$, where $n(P^l)$ is the turn number of $P^l$. Since we are assuming $\sigma_\nu > 2$, this implies $n(P^l)>0$. Also, each right turn of $P^l$ is a $270^o$ angle in the external face of $H_\nu$, hence it is either a degree-2 vertex or a bend. Hence, $n(P^l) \le k+b$ and, consequently, $\sigma_\nu \le k+b+2$. 
See for example 
\cref{fi:ftp-root-a}, which depicts an orthogonal representation $H_\nu$ of $G_\nu$ with spirality $\sigma_\nu = \frac{3}{2}$ and $n(P^l)=3$. Path $P^l$ has one degree-2 vertex and three bends.  
\end{proof}

\smallskip\noindent{\bf Spirality sets.} For each node $\nu$ of $T$, we define a function $X_\nu$ that maps each target spirality value $\sigma_\nu$ for a representation of $G_\nu$ to a pair $(b_\nu,H_\nu)$ such that $b_\nu$ is the minimum number of bends over all orthogonal representations of $G_\nu$ with spirality $\sigma_\nu$, and $H_\nu$ is a $b_\nu$-orthogonal representation of $G_\nu$ with spirality $\sigma_\nu$. We say that $b_\nu$ is the \emph{cost} of $\sigma_\nu$ for $\nu$ and that $H_\nu$ is the \emph{representative} for $\nu$ given $\sigma_\nu$. 
If the minimum number of bends of an orthogonal representation of $G_\nu$ having spirality $\sigma_\nu$ is higher than a target maximum number of bends $b$, we write $X_\nu(\sigma_\nu)=(\infty,\emptyset)$. 
The set $\Sigma_\nu = \{ (\sigma_\nu, X_\nu(\sigma_\nu)) \;|\; X_\nu(\sigma_\nu) \neq (\infty,\emptyset)\}$ is called 
the \emph{spirality set of $\nu$}. 

If for some node $\nu$ the spirality set $\Sigma_\nu$ is empty, 
the instance can be safely rejected. Also, by \cref{le:spirbounded}, for each node $\nu$ of $T$,  we can assume $|\sigma_\nu|\le k+b+2$. Hence, we have $|\Sigma_\nu|=O(k+b)$.
%
In~\cite{DBLP:journals/siamcomp/BattistaLV98} it is shown that $\Sigma_\nu$  can be computed in time: $(i)$ $O(|\Sigma_\nu|)$ if $\nu$ is a Q$^*$-node; $(ii)$ $O(|\Sigma_\nu|)$ if $\nu$ is a P-node and $O(|\Sigma_\nu|^2)$ if $\nu$ is an S-node, and if we know the spirality sets of the children of $\nu$. For each pair $(\sigma_\nu,X_\nu(\sigma_\nu))$ in $\Sigma_\nu$, the representative $H_\nu$ for $\nu$ is encoded in constant time by suitably linking the representatives for the children of $\nu$ and adding a constant-size information that specifies the values of the angles at the poles~of~$\nu$.   
By \Cref{le:spirbounded} we can restate the results in~\cite{DBLP:journals/siamcomp/BattistaLV98} as follows.   


\begin{lemma}
\label{le:Q-computation}
If $\nu$ is a Q$^*$-node of $T$ then $\Sigma_\nu$ can be computed in $O(k+b)$ time.
\end{lemma}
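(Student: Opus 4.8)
The plan is to exploit the fact that, for a Q$^*$-node, the pertinent graph $G_\nu$ is a single chain, so that its orthogonal representations are completely described by the turns taken while walking along it. First I would observe that each pole is an endpoint of the chain and is therefore incident to exactly one edge of $G_\nu$, i.e. $\indeg_\nu(u)=\indeg_\nu(v)=1$. By the definition of alias vertices this is case $(i)$, giving $A^u=\{u\}$ and $A^v=\{v\}$, and hence case $(i)$ of the definition of spirality applies: the unique spine coincides with the chain $G_\nu$ itself, the stub edges $(u',u)$ and $(v,v')$ are trivial, and $\sigma_\nu=n(G_\nu)$ is an integer. In particular, for Q$^*$-nodes no semi-integral values arise and there is a single spine to reason about.

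Next I would turn the turn number into a counting problem. Let $m$ be the number of internal degree-2 vertices of the chain. Walking from $u$ to $v$, the turn number $n(G_\nu)$ is the sum of the turns at these $m$ vertices, each contributing $+1$, $0$, or $-1$ (for a $90^\circ$, $180^\circ$, or $270^\circ$ angle), plus one unit per bend with sign determined by its direction. To realize a prescribed target $\sigma_\nu$ with as few bends as possible, I would first spend the degree-2 vertices: every integer in $[-m,m]$ is attainable with zero bends, while each unit of $|\sigma_\nu|$ beyond $m$ forces at least one bend and is realized by exactly one bend of the appropriate sign. Hence the cost of $\sigma_\nu$ is $b_\nu=\max\{0,\,|\sigma_\nu|-m\}$, and $\sigma_\nu$ is feasible within the budget exactly when $|\sigma_\nu|\le m+b$. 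A representative $H_\nu$ is obtained by turning in the direction of $\sigma_\nu$ at $\min\{|\sigma_\nu|,m\}$ of the degree-2 vertices, keeping the rest straight, and adding $\max\{0,\,|\sigma_\nu|-m\}$ bends of that direction; any such assignment is a valid orthogonal representation, since a path imposes no planarity or consistency constraint on its turn sequence.

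Finally I would bound the running time. By \cref{le:spirbounded} only the values $\sigma_\nu\in[-k-b-2,\,k+b+2]$ need to be considered, so $|\Sigma_\nu|=O(k+b)$. For each such value the cost is given by the closed-form expression $\max\{0,\,|\sigma_\nu|-m\}$ and is evaluated in $O(1)$ time, while the representative is stored as a constant-size parametric descriptor (the common turn direction together with the number of turning degree-2 vertices and the number of bends), consistent with the compact encoding of representatives used throughout. Since $m\le k$, every feasible value $|\sigma_\nu|\le m+b$ lies inside the considered range, so no entry of $\Sigma_\nu$ is missed. Iterating over the $O(k+b)$ candidate spiralities therefore produces $\Sigma_\nu$ in $O(k+b)$ total time.

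I expect the only genuinely delicate point to be the reduction to case $(i)$ of the spirality definition: one must verify that a Q$^*$-node, unlike an S-, P-, or R-node, always has indegree one at both poles, which is precisely what rules out alias vertices, semi-integral spiralities, and any dependence on the angles the chain forms with its surroundings. Once this is settled, the counting argument and the $O(1)$-per-value evaluation are routine.
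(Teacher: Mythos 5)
Your proof is correct. The paper itself gives no explicit argument for \cref{le:Q-computation}: it simply cites Di Battista, Liotta and Vargiu for the fact that $\Sigma_\nu$ is computable in $O(|\Sigma_\nu|)$ time when $\nu$ is a Q$^*$-node, and then invokes \cref{le:spirbounded} to bound $|\Sigma_\nu|$ by $O(k+b)$. What you have done is reprove the delegated ingredient from scratch, and your derivation matches what that reference establishes for chain components: the poles of a Q$^*$-node have indegree one in $G_\nu$, so no alias vertices arise, the spirality is the integer turn number of the chain, and the minimum bend cost for a target $\sigma_\nu$ is $\max\{0,|\sigma_\nu|-m\}$ where $m$ is the number of internal degree-2 vertices. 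Your closing observation that $m\le k$ ensures every feasible spirality falls in the window $[-k-b-2,k+b+2]$ is the right sanity check, and the constant-size parametric encoding of the representative is consistent with how the paper stores representatives. The only thing your write-up adds beyond the paper is the explicit realizability remark (that any turn sequence on a path is drawable without self-intersection by choosing segment lengths), which is standard but worth stating; conversely, the only thing the paper's route buys is brevity, since it leans entirely on the cited prior work. No gap.
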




\begin{lemma}
\label{le:P-computation}
Let $\nu$ be a P-node of $T$ with children $\mu_1,...,\mu_h$ ($h \in \{2,3\}$). Given $\Sigma_{\mu_i}$ for each $i \in \{1, \dots, h\}$, there is an $O(k+b)$-time algorithm that computes~$\Sigma_\nu$.
\end{lemma}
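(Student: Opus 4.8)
The plan is to build $\Sigma_\nu$ directly from the children's spirality sets by exploiting the combination rule for P-nodes established in~\cite{DBLP:journals/siamcomp/BattistaLV98}. Recall that in any orthogonal representation the children of a P-node appear in some left-to-right order around the poles $u,v$, and that each pair of consecutive children bounds an internal face. The rectilinearity of such a face forces a linear relation between the spiralities of two consecutive children and the two angles formed at $u$ and $v$ inside that face; since these angles lie in $\{1,2,3\}$ (in units of $90^\circ$) and the angles around each pole sum to $4$, the child spiralities $\sigma_{\mu_1},\dots,\sigma_{\mu_h}$ are rigidly linked: consecutive values differ by a bounded amount and, once the target spirality $\sigma_\nu$ of the whole component and the (constantly many) angle assignments at $u$ and $v$ are fixed, every $\sigma_{\mu_i}$ is determined. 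Moreover, since a P-node introduces no bends of its own, the number of bends of $H_\nu$ equals $\sum_i b_{\mu_i}$.

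First I would enumerate the possible embeddings of $\skel(\nu)$: as $h \le 3$, there are $O(1)$ admissible orderings of the children with the reference edge on the outer face. For a fixed ordering and a fixed target $\sigma_\nu$, I would iterate over the $O(1)$ feasible assignments of the angles at $u$ and $v$; each assignment yields, through the linear relation above, a unique tuple $(\sigma_{\mu_1},\dots,\sigma_{\mu_h})$. I would then look up $X_{\mu_i}(\sigma_{\mu_i})$ in each $\Sigma_{\mu_i}$ and set the candidate cost to $\sum_i b_{\mu_i}$, which is $\infty$ whenever some child spirality is infeasible or the total exceeds $b$. Taking the minimum over all orderings and all angle assignments gives $X_\nu(\sigma_\nu)$; the representative $H_\nu$ is obtained by linking the children's representatives and recording the chosen pole angles, which is constant-size information.

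For the running time, the crucial point is that, for each target $\sigma_\nu$, only $O(1)$ child-spirality tuples have to be examined (one per angle assignment), in contrast with the series case, where spiralities add up and a quadratic convolution is needed. Representing each $\Sigma_{\mu_i}$ as an array indexed by spirality over the range $[-k-b-2,\,k+b+2]$ guaranteed by \cref{le:spirbounded}, every lookup takes $O(1)$ time. Since the same lemma bounds the number of target values by $|\Sigma_\nu|=O(k+b)$ and each target is processed in $O(1)$ time, the whole set $\Sigma_\nu$ is computed in $O(k+b)$ time, as claimed.

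The step I expect to be the main obstacle is establishing the rigid linkage between the child spiralities and bounding by $O(1)$ the number of relevant angle assignments at the poles, i.e., the case analysis (according to whether each pole has in-degree one or two inside the children) behind the P-node combination rule of~\cite{DBLP:journals/siamcomp/BattistaLV98}. Once this structural fact is in place, the algorithm and its analysis are routine, and the only genuinely new ingredient with respect to~\cite{DBLP:journals/siamcomp/BattistaLV98} is the substitution of the spirality bound of \cref{le:spirbounded} to turn their $O(|\Sigma_\nu|)$ bound into the explicit $O(k+b)$ bound.
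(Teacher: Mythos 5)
Your proposal is correct and follows essentially the same route as the paper: the paper simply invokes the $O(|\Sigma_\nu|)$-time P-node combination rule of Di Battista, Liotta and Vargiu and then applies \cref{le:spirbounded} to bound $|\Sigma_\nu|$ by $O(k+b)$, which is exactly the structure of your argument. You merely unpack the cited combination rule (the rigid linkage of child spiralities via the $O(1)$ pole-angle assignments) that the paper leaves as a black box.
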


\begin{lemma}
\label{le:S-computation}
Let $\nu$ be an S-node of $T$ with children $\mu_1$ and $\mu_2$. Given $\Sigma_{\mu_1}$ and $\Sigma_{\mu_2}$, there is an $O((k+b)^2)$-time algorithm that computes~$\Sigma_\nu$.
\end{lemma}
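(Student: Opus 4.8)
The plan is to rely on the classical series-composition rule for spirality from~\cite{DBLP:journals/siamcomp/BattistaLV98} together with the size bound of \Cref{le:spirbounded}. I would denote by $u$ and $v$ the poles of $\nu$, ordered according to the $st$-numbering, and by $m$ the vertex shared by $G_{\mu_1}$ and $G_{\mu_2}$, so that $\mu_1$ has poles $\{u,m\}$, $\mu_2$ has poles $\{m,v\}$, and $G_\nu$ is the series composition of $G_{\mu_1}$ and $G_{\mu_2}$ at $m$. The starting observation is that any spine of $H_\nu$ from $u$ to $v$ splits into a spine of $H_{\mu_1}$ from $u$ to $m$, the turn performed at $m$, and a spine of $H_{\mu_2}$ from $m$ to $v$. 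Turn numbers therefore add up, and after accounting for the alias vertices entering the definition of spirality one obtains a relation of the form $\sigma_\nu = \sigma_{\mu_1} + \sigma_{\mu_2} + t_m$, where the correction $t_m$ is the contribution of the angle at $m$ and ranges over a set of constant size whose admissible values depend only on $\indeg_{\mu_1}(m)$ and $\indeg_{\mu_2}(m)$, hence on the degree of $m$, which is at most four.

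Given this rule, I would compute $\Sigma_\nu$ as a min-plus combination of $\Sigma_{\mu_1}$ and $\Sigma_{\mu_2}$. Concretely, I would keep an array indexed by the admissible spiralities $\sigma_\nu \in [-k-b-2,\,k+b+2]$, initialized to cost $+\infty$, and then iterate over every pair $(\sigma_{\mu_1},\sigma_{\mu_2})$ with entries in $\Sigma_{\mu_1}$ and $\Sigma_{\mu_2}$ and over the constant number of admissible values of $t_m$. Each such triple yields a candidate spirality $\sigma_\nu=\sigma_{\mu_1}+\sigma_{\mu_2}+t_m$ realizable with $b_{\mu_1}+b_{\mu_2}$ bends, since gluing the two components at $m$ introduces no new bend; for each candidate I would update the stored cost of $\sigma_\nu$ to the minimum and discard any entry whose cost exceeds $b$, as required by the definition of $\Sigma_\nu$. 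The associated representative $H_\nu$ is produced in $O(1)$ time by linking the representatives of $\mu_1$ and $\mu_2$ at $m$ and recording the chosen angle at $m$ together with the angles at $u$ and $v$. Since \Cref{le:spirbounded} gives $|\Sigma_{\mu_1}|,|\Sigma_{\mu_2}|=O(k+b)$, the number of processed pairs is $O((k+b)^2)$, and as each is handled in constant time the overall running time is $O((k+b)^2)$.

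For correctness I would establish soundness and completeness separately. Soundness is immediate: composing valid representations of $G_{\mu_1}$ and $G_{\mu_2}$ with a consistent angle at $m$ yields a valid representation of $G_\nu$ whose spirality equals $\sigma_{\mu_1}+\sigma_{\mu_2}+t_m$ and whose bends number $b_{\mu_1}+b_{\mu_2}$. For completeness, a $b$-orthogonal representation $H_\nu$ of spirality $\sigma_\nu$ restricts to representations of the two children, each with at most $b$ bends and hence with spiralities recorded in $\Sigma_{\mu_1}$ and $\Sigma_{\mu_2}$; by the substitution property recalled earlier these restrictions can be replaced by the stored representatives of equal spirality without increasing the bends outside them, so the underlying pair of children spiralities is enumerated by the algorithm and the minimum stored cost does not exceed $b_\nu$. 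The step I expect to require the most care is the precise derivation of $\sigma_\nu=\sigma_{\mu_1}+\sigma_{\mu_2}+t_m$ and of the admissible set of $t_m$: one must track how the alias vertices of the shared pole $m$, which are present in $H_{\mu_1}$ and $H_{\mu_2}$ but not in $H_\nu$, interact with the four cases of the spirality definition, and check that each admissible angle at $m$ is compatible with $\indeg_{\mu_1}(m)$ and $\indeg_{\mu_2}(m)$. Once this constant-size correction is pinned down, the quadratic running time follows directly from the $O(k+b)$ bound on the spirality-set sizes.
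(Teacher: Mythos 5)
Your proposal is correct and follows essentially the same route as the paper, which obtains this lemma by invoking the series-composition rule for spirality from Di Battista, Liotta, and Vargiu and combining it with the $O(k+b)$ bound on spirality-set sizes from \cref{le:spirbounded}. Your min-plus combination over pairs of child spiralities, with a constant-size angle correction $t_m$ at the shared pole and constant-time linking of representatives, is exactly that argument made explicit.
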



Let $\rho$ be the root of $T$ and let $H$ be an orthogonal representation of $G$ where $G_\rho$ (i.e., the reference chain) is on the external face. Without loss of generality, we assume that $G_\rho$ is to the right of $H$, i.e., for each simple cycle containing the end-vertices
$u$ and $v$ of $G_\rho$, visiting this cycle clockwise we encounter $u$, $v$, and $G_\rho$ in this order. 
In any orthogonal representation $H$ of $G$, the number of right turns minus the number of left turns encountered traversing clockwise any cycle of $G$ in $H$ equals four~\cite{DBLP:books/ph/BattistaETT99}.
Hence, the following lemma immediately holds.

\begin{lemma}
\label{le:chain}
Let $H$ be an orthogonal representation of $G$, let $T$ be the SPQ$^*$R-tree rooted at $\rho$, and let $\nu$ be the root child of $T$. Denoted by $n(H_\rho^{uv})$ the turn number of the chain $G_\rho$ in $H$, moving from $u$ to $v$, we have
$\sigma(H_\nu)-n(H_\rho^{uv})=4$. 
\end{lemma}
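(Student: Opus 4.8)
The plan is to realize the identity as an instance of the cited fact that, in any orthogonal representation, a clockwise traversal of a simple cycle has right-minus-left turn number equal to $4$. First I would fix any simple path $P^{uv}$ from $u$ to $v$ inside $G_\nu$ and let $C$ be the simple cycle obtained by concatenating $P^{uv}$ with the reference chain $G_\rho$; such a cycle exists because $G$ is biconnected and $G_\rho$ joins the two poles $u,v$ of $\nu$. Since by assumption $G_\rho$ lies to the right of $H$, a clockwise traversal of $C$ meets $u$, then $P^{uv}$, then $v$, and finally $G_\rho$ back to $u$; hence the clockwise traversal goes from $u$ to $v$ along $P^{uv}$ and from $v$ to $u$ along $G_\rho$. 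By the independence of spirality from the chosen path, working with this particular $P^{uv}$ is without loss of generality.

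Next I would observe that the two poles admit a single alias each. Indeed, as $G_\rho$ is a single chain it contributes exactly one edge to each pole, so $\outdeg_\nu(u)=\outdeg_\nu(v)=1$; together with $\indeg_\nu(u),\indeg_\nu(v)\ge 2$ (the poles have degree larger than two, being endpoints of a maximal chain), this puts us in the single-alias case, whence $\sigma(H_\nu)=n(S^{u'v'})$ for the spine $S^{u'v'}$ obtained by prefixing $(u',u)$ and suffixing $(v,v')$ to $P^{uv}$, where $u'$ and $v'$ split the $G_\rho$-edges incident to $u$ and to $v$. The key step is then to decompose the clockwise turn of $C$ into four contributions: the turns strictly inside $P^{uv}$, the turn at $u$, the turn at $v$, and the turns strictly inside $G_\rho$. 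Checking orientations, the edge of $C$ entering $u$ from $G_\rho$ is exactly the segment split by $u'$, directed towards $u$, which matches the prefix $(u',u)$ of the spine; symmetrically the edge of $C$ leaving $v$ is the segment split by $v'$ and matches the suffix $(v,v')$. Therefore the first three contributions sum precisely to $n(S^{u'v'})=\sigma(H_\nu)$.

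It remains to evaluate the fourth contribution, namely the turns strictly inside $G_\rho$ traversed from $v$ to $u$. Since all turns of the chain are internal to it, and since reversing the traversal direction interchanges left and right turns, this contribution equals $-n(H_\rho^{uv})$. Combining the four pieces with the turn-sum fact gives $4=\sigma(H_\nu)-n(H_\rho^{uv})$, as claimed. The main obstacle I anticipate is purely the sign and orientation bookkeeping at the poles: one must verify that the short spine stubs $(u',u)$ and $(v,v')$ are traversed in the same direction as the corresponding edges of the clockwise cycle $C$, so that the angles at $u$ and $v$ counted by the spine coincide with those counted by $C$. Everything else is a direct application of the turn-sum fact together with the definition of spirality.
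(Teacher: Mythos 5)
Your proof is correct and follows exactly the route the paper intends: the paper derives the lemma ``immediately'' from the fact that a clockwise traversal of any cycle in an orthogonal representation has turn number $4$, and your argument is precisely the careful expansion of that one-liner (forming the cycle $P^{uv}\cup G_\rho$, checking that both poles fall into the single-alias case so that $\sigma(H_\nu)=n(S^{u'v'})$, and negating the chain's turn number for the reversed traversal). The orientation bookkeeping at $u$ and $v$ that you flag does check out, since the spine stubs $(u',u)$ and $(v,v')$ are traversed in the same direction as the corresponding cycle edges.
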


\noindent\cref{fi:ftp-root-b} illustrates \cref{le:chain}, where $n(H_\rho^{uv})=-5$ and $\sigma(H_\nu)=-1$. The next lemma gives the time complexity needed to compute $\Sigma_\nu$ when $\nu$ is an R-node. 

\begin{figure}[t]
	\centering
    \subfigure[]{\includegraphics[width=0.48\columnwidth,page=1]{ftp-root}\label{fi:ftp-root-a}}
    \subfigure[]{\includegraphics[width=0.48\columnwidth,page=2]{ftp-root}\label{fi:ftp-root-b}}
	\caption{(a) Illustration for the proof of \cref{le:spirbounded}. (b) Illustration for \cref{le:chain}.}
 \label{fi:root}
\end{figure} 

\begin{restatable}{lemma}{leRcomputation}
\label{le:R-computation}
Let $\nu$ be an R-node of $T$ and let $\mu_1, \dots ,\mu_d$ be its children that are not $Q^*$-nodes. Given $\Sigma_{\mu_i}$ for each $i \in \{1, \dots, d\}$, there exists an algorithm that computes $\Sigma_\nu$ in $O(2^{p \cdot \log p}) \cdot n^{O(1)}$ time, with $p=k+b$.
\end{restatable}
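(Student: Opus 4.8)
The plan is to reduce the computation of the spirality set of an R-node to a sequence of constrained bend-minimization problems on a fixed planar embedding, one for each relevant combination of \emph{interface data} at the children, and then to solve each such problem with the flow-based machinery of \cref{le:planebendmin}. The key structural observation, which must be established first, is that the skeleton $\skel(\nu)$ of an R-node is triconnected and hence has a \emph{unique} planar embedding up to reflection; consequently the cyclic order of the children $\mu_1,\dots,\mu_d$ around each face of $\skel(\nu)$ is fixed. This removes the combinatorial explosion of embeddings that treewidth was previously used to control, and leaves us only with the task of choosing, for each child $\mu_i$, which representative to use---i.e.\ which target spirality $\sigma_{\mu_i}$ from $\Sigma_{\mu_i}$ to install---together with the internal angles that $\skel(\nu)$ forms at the poles of each $\mu_i$.

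Next I would bound $d$, the number of non-$Q^*$ children, by $O(p)=O(k+b)$. This is the crucial counting step and the place where the parameter enters: in a $b$-orthogonal representation every virtual edge corresponding to an S-, P-, or R-child must, inside the rigid skeleton, contribute bends or route through degree-$2$ vertices in a way that is charged against the budget $k+b$, so only $O(k+b)$ such children can be ``nontrivial,'' while the remaining children behave like single edges and can be folded into the flow network directly. I would make this precise by arguing that each child whose pertinent graph is not a single edge forces at least one unit of ``angular deficit'' at a pole that must be paid for by a bend or a low-degree vertex. Having bounded $d$ by $O(p)$, each child offers $|\Sigma_{\mu_i}|=O(p)$ admissible spirality values, so the number of tuples $(\sigma_{\mu_1},\dots,\sigma_{\mu_d})$ to enumerate is $O(p^d)=O(p^{p})=2^{O(p\log p)}$, which matches the target running time.

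For each fixed tuple of target spiralities I would build a flow-network for $\skel(\nu)$ in which each virtual edge $e_i$ representing child $\mu_i$ is replaced by a gadget whose admissible flow encodes exactly the spirality $\sigma_{\mu_i}$ prescribed by the tuple; this is where the equivalence ``same spirality $\Rightarrow$ interchangeable representation'' quoted from~\cite{DBLP:journals/siamcomp/BattistaLV98} is used, so that it suffices to realize the \emph{spirality} of each child rather than its full internal structure. The cost of the flow then accounts for the bends created \emph{inside} $\skel(\nu)$ (at the poles and along real edges), and I would add to it the already-computed internal costs $b_{\mu_i}$ of the chosen representatives to obtain the total bend count for that tuple and that resulting spirality $\sigma_\nu$ of $G_\nu$. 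Applying \cref{le:planebendmin} solves each such constrained instance in $n^{O(1)}$ time; iterating over all $2^{O(p\log p)}$ tuples and, for each, over the two reflections of the rigid skeleton and over the target value of $\sigma_\nu$, then keeping the minimum-cost representative for every achievable $\sigma_\nu$, yields $\Sigma_\nu$ within the claimed $2^{O(p\log p)}\cdot n^{O(1)}$ bound.

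The main obstacle I expect is the correctness and encoding of the spirality-gadget inside the flow network: spirality is a global turn-number quantity along a spine, whereas Tamassia's flow network natively controls only local vertex-angles and per-edge bends, so I must verify that fixing appropriate arc flows on the gadget forces precisely the desired turn number between the two poles of each child and that this is simultaneously consistent with the global face-angle constraints (every internal face summing to the correct total and the relation of \cref{le:chain} at the boundary). A secondary subtlety is handling poles of degree four, where a child may attach on either side and the alias-vertex case analysis of the spirality definition (Cases~(i)--(iv)) interacts with the angles assigned by $\skel(\nu)$; I would treat these by splitting each pole into its alias vertices within the gadget, exactly as in the definition of spirality, so that the flow network sees a well-defined path between alias vertices whose cost-zero routing corresponds to the target spirality.
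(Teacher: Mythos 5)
Your proposal follows essentially the same route as the paper: fix one of the two embeddings of the rigid skeleton, bound the number $d$ of non-$Q^*$ children by $O(k+b)$ by charging each such child (which contains a cycle and hence needs reflex angles on its outer face away from its poles) to degree-2 vertices or bends, enumerate the $O((k+b)^{d})=2^{O(p\log p)}$ tuples of child spiralities, and solve one constrained bend-minimization per tuple via \cref{le:planebendmin}. The one obstacle you flag---designing a flow gadget that forces a prescribed spirality---is a non-issue in the paper's version: since \cref{le:planebendmin} already accepts an arbitrary fixed sub-representation $H'$, one simply substitutes the concrete representative $H_{\mu_i}$ of each child (interchangeability of same-spirality components makes this lossless) together with a dummy reference edge $e=(u,v)$ on the external face carrying exactly $\sigma_\nu-4$ prescribed bends (adjusted to $\sigma_\nu-\tfrac{7}{2}$ or $\sigma_\nu-1$ in the alias-vertex cases, after attaching dummy degree-1 neighbours at the poles), and \cref{le:chain} then guarantees the restriction to $G_\nu$ has spirality exactly $\sigma_\nu$; you should make this last step explicit, since for a fixed tuple the outer spirality is not determined by the children alone and must be pinned down by the target value rather than read off ``as the resulting spirality'' of an unconstrained minimum.
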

\begin{proof}[Sketch]
We prove the lemma in the case where $G_\nu$ has two alias vertices. The other cases can be proved with similar arguments, see the appendix. For every target value of spirality $\sigma_\nu \in [-k-b-2,k+b+2]$, we compute $X_\nu(\sigma_\nu)$ as follows.
For each of the two possible embeddings of $\skel(\nu)$, consider the component $G_\nu$ enhanced with a dummy edge $e$, connecting the poles $u$ and $v$ of $\nu$, in such a way that $e$ is on the external face of $G_\nu$ and to the right of $G_\nu$. Let $G'$ be the resulting graph. A \emph{valid tuple} is a tuple $(\sigma_1,...,\sigma_d)$ such that $\sum_{i=1}^db_i\le b$, where $X_{\mu_i}(\sigma_i)=(b_i,H_i)$ for every $i\in \{1,\dots, d\}$. 
For each valid tuple, we perform the following procedure. Let $H_e$ be an orthogonal representation of $e$ such that $n(H_e^{uv})=\sigma_\nu-4$, where $H_e^{uv}$ is the turn number of $H_e$ going from $u$ to $v$. This means that $H_e^{uv}$ has $\sigma_\nu-4$ bends, each turning to the left if $\sigma_\nu-4<0$, or each turning to the right if $\sigma_\nu-4>0$. Also, let $J=e \cup G_{\mu_1} \cup \dots \cup G_{\mu_d}$ and let $H_J$ be an orthogonal representation of the graph $J$ such that: $(i)$ the restriction of $H_J$ to $G_{\mu_i}$ coincides with $H_{\mu_i}$ for each $i\in \{1,\dots, d\}$; and $(ii)$ the restriction of $H_J$ to $e$ coincides with $H_e$. We compute an $H_J$-constrained orthogonal representation $H'$ of $G'$ with the minimum number of bends in polynomial time by \cref{le:planebendmin}, if it exists.  Consider an SPQ$^*$R-tree $T'$ of $G'$ rooted at the Q$^*$-component $\xi$ representing $e$. 
Note that $\nu$ is the root child of $T'$.
Hence, by \cref{le:chain}, if $H'$ exists then its restriction $H_\nu$ to $G_\nu$ has spirality $\sigma_\nu$. By \cref{le:planebendmin}, $H_\nu$ is a bend-minimum orthogonal representation of $G_\nu$ having spirality $\sigma_\nu$ and the spirality of the restriction of $H_\nu$ to $G_{\mu_i}$ is $\sigma_i$, for each $i\in \{1,\dots, d\}$. 
Denote by $b_\nu$ the number of bends of $H_\nu$.
If $b_\nu > b$ for each of the two planar embeddings of $\skel(\nu)$, then $X_\nu(\sigma_\nu)=(\infty,\emptyset)$ and we do not insert $(\sigma_\nu,X_\nu(\sigma_\nu))$ in the spirality set $\Sigma_\nu$. Else, for the two embeddings of $\skel(\nu)$ we retain the representation $H_\nu$ of minimum cost $b_\nu$, set $X_\nu(\sigma_\nu)=(b_\nu,H_\nu)$, and insert $(\sigma_\nu,X_\nu(\sigma_\nu))$ in $\Sigma_\nu$. 

\smallskip \noindent\textsf{Correctness.} The correctness of the procedure above follows by these facts: $(i)$ to construct a bend-minimum representation $H_\nu$ with spirality $\sigma_\nu$, we consider all possible combinations of values of spiralities for $G_{\mu_1}, \dots, G_{\mu_d}$; $(ii)$ thanks to the interchangeability of orthogonal components with the same spirality, for each such combination we aim to construct $H_\nu$ so that it contains a minimum-bend representation of each $G_{\mu_i}$ with its target value of spirality; $(iii)$ if $b_\nu \leq b$, the spirality determined by $H_\nu$ for each child $\mu$ of $\nu$ that corresponds to a Q$^*$-component is surely in the spirality set of $\mu$; $(iv)$ By \cref{le:spirbounded} it suffices to test the existence of an $H_\nu$ for each target value of spirality $\sigma_\nu\in [-k-b-2,k+b+2]$.

\smallskip \noindent\textsf{Time-complexity.} By \cref{le:planebendmin}, for each tuple the time required for the computation is $O(n^\frac{7}{4}\sqrt{\log n})$. By \cref{le:spirbounded}, we consider $O(d^{k+b})$ spirality values.
Moreover, we can assume that $d\le k+b$, otherwise the instance can be rejected. More precisely, since $G_{\mu_i}$ ($i\in \{1,\dots, d\}$) contains at least one cycle, any orthogonal representation of $G_{\mu_i}$ requires four $270^\circ$ (i.e., reflex) angles on the external face, and at most two of these angles can occur at the poles of $G_{\mu_i}$. Since each reflex angle that does not occur at a pole of $G_{\mu_i}$ requires either a degree-2 vertex or a bend, we necessarily have $d \leq k+b$ if $H_\nu$ exists. 
Hence, there are $O(d^{k+b})=O((k+b)^{k+b})=O(2^{(k+b) \cdot \log(k+b)})$ valid tuples.
\end{proof}

We can now prove the main result of this section.

\begin{theorem}
\label{th:biconnected}
Let $G$ be a biconnected graph with $k$ degree-2 vertices, let $b$ be an integer, and let $p=k+b$. There exists an $O(2^{p \cdot \log p}) \cdot n^{O(1)}$-time algorithm 
that tests whether $G$ admits a $b$-orthogonal representation, and that computes one with the minimum number of bends in the positive case.
\end{theorem}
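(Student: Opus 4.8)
The plan is to assemble the per-node lemmas into a single bottom-up dynamic program over the normalized SPQ$^*$R-tree $T$, and to close the computation at the root by means of \cref{le:chain}. Before that, I would argue that it suffices to iterate over the choice of reference chain. The key observation is that in every planar embedding of $G$ the external face is bounded by a union of \emph{entire} Q$^*$-chains: each internal vertex of a Q$^*$-chain has degree two, so its two incident edges border the same pair of faces, and hence if one edge of a chain lies on the external face the whole chain does. Consequently every planar embedding of $G$ has at least one Q$^*$-chain on its external face, and by rooting $T$ at the corresponding Q$^*$-node $\rho$ we capture a representation with that chain on the outer face; since reflection preserves the number of bends, the convention that $G_\rho$ lies to the right of $H$ loses no generality. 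Thus, letting $\rho$ range over all $O(n)$ Q$^*$-nodes and taking the best outcome yields the bend-minimum representation over \emph{all} planar embeddings, at the cost of only a polynomial factor in the running time.

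For each fixed root $\rho$, I would then compute the spirality sets $\Sigma_\nu$ in post-order. The leaves are Q$^*$-nodes, handled by \cref{le:Q-computation} in $O(k+b)$ time each; the internal P-, S-, and R-nodes are handled, given the spirality sets of their children, by \cref{le:P-computation,le:S-computation,le:R-computation} in $O(k+b)$, $O((k+b)^2)$, and $O(2^{p\log p})\cdot n^{O(1)}$ time respectively. If any $\Sigma_\nu$ turns out empty, the instance is rejected. By \cref{le:spirbounded} every spirality set has size $O(k+b)$, so all sets and all representatives stay small, each representative being encoded in constant space by linking the representatives of the children and recording the angles at the poles.

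The root combination is where I would invoke \cref{le:chain}. Let $\nu$ be the root child, with poles $u,v$. For each pair $(\sigma_\nu, X_\nu(\sigma_\nu))\in\Sigma_\nu$ with $X_\nu(\sigma_\nu)=(b_\nu,H_\nu)$, \cref{le:chain} forces the reference chain $G_\rho$ to realize turn number $\sigma_\nu-4$; reading the cost $X_\rho(\sigma_\nu-4)=(b_\rho,H_\rho)$ off the Q$^*$-node computation applied to the chain, the candidate total number of bends is $b_\nu+b_\rho$. I would minimize this over all $\sigma_\nu\in[-k-b-2,k+b+2]$, discarding combinations whose total exceeds $b$, and finally take the minimum over all roots $\rho$. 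A bend-minimum representation is recovered by backtracking through the DP from the optimal choice, which is polynomial. Correctness follows since the DP explores, at every node, every admissible spirality in the range of \cref{le:spirbounded}, and by the interchangeability of equal-spirality components it always stores a bend-minimum representative for each value; \cref{le:chain} then guarantees that the root step enumerates exactly the representations of $G$ with $G_\rho$ on the outer face.

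For the running time, fix a root: there are $O(n)$ nodes, the R-nodes dominating at $O(2^{p\log p})\cdot n^{O(1)}$ each while all other node types are polynomial, so one root costs $O(2^{p\log p})\cdot n^{O(1)}$; multiplying by the $O(n)$ choices of root preserves the bound $O(2^{p\log p})\cdot n^{O(1)}$. I expect the only genuinely delicate point in this assembly to be the root step, namely establishing that iterating over reference chains captures all planar embeddings and that \cref{le:chain} couples the root-child spirality with the chain turn number correctly; everything else is routine bookkeeping over the already-established per-node lemmas, with the real combinatorial weight residing inside \cref{le:R-computation}.
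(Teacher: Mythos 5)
Your proposal is correct and follows essentially the same route as the paper: iterate over all $O(n)$ choices of reference Q$^*$-chain, run a post-order dynamic program computing spirality sets via \cref{le:Q-computation,le:P-computation,le:S-computation,le:R-computation}, and close at the root by coupling the root-child spirality with the chain's turn number through \cref{le:chain} (the paper makes your ``cost of the chain'' step explicit as $b_\rho=\max(0,|\sigma_\nu-4|-n_\rho)$, which matches your reading of the Q$^*$-node computation). The only additions worth noting are your explicit justification that every planar embedding exposes a whole Q$^*$-chain on the external face, which the paper leaves implicit, and the paper's explicit dispatch of the simple-cycle case up front.
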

\begin{proof}
If $G$ is a simple cycle the test is trivial. Otherwise, let $T$ be the SPQ$^*$R-tree of $G$. For each Q$^*$-node $\rho$ of $T$, root $T$ at $\rho$. Recall that this rooted tree describes all planar embeddings of $G$ with the reference chain $G_\rho$ on the external face. We perform a post-order visit of~$T$. We first compute $\Sigma_\nu$ for each leaf $\nu$ of $T$, that is, for each Q$^*$-node distinct from $\rho$. We can do this in $O(n(k+b))$ time by \cref{le:Q-computation}. During this visit of~$T$, for every internal non-root node~$\nu$ of~$T$ the algorithm computes~$\Sigma_\nu$ by using the spirality sets of the children of~$\nu$, exploiting \cref{le:P-computation,le:S-computation,le:R-computation}, depending on whether $\nu$ is a P-, an S-, or an R-node, respectively. If the spirality set $\Sigma_\nu$ of $\nu$ is empty, then $G$ does not have a $b$-orthogonal representation with the given reference edge $G_\rho$ on the external face. In this case the algorithm stops visiting $T$ rooted at $\rho$, and starts visiting $T$ rooted at another Q$^*$-node. 
Suppose that the algorithm reaches the root child $\nu$ of $T$, computes the spirality set $\Sigma_\nu$, and this spirality set is not empty. 
Denote by $u$ and $v$ the poles of $\nu$.
Then, the algorithm considers each pair $(\sigma_\nu, X_\nu(\sigma_\nu)=(b_\nu,H_\nu))$ in the spirality set of $\nu$. \cref{le:chain} implies that $G$ admits an orthogonal representation $H$ with $G_\rho$ on the external face whose restriction to $G_\nu$ in $H$ has spirality $\sigma_\nu$ if and only if the restriction to $G_\rho$ in $H$ has an orthogonal representation $H_\rho$ such that $n(H_\rho^{uv})=\sigma_\nu-4$. Let $n_\rho$ be the number of vertices of $G_\rho$ different from $u$ and $v$. Since $G_\rho$ is a chain of edges, we have that such $H_\rho$ has $b_\rho=0$ bends if $n_\rho\ge |\sigma_\nu-4|$; otherwise it has $b_\rho=|\sigma_\nu-4|-n_\rho$ bends and we just check if $b_\nu+b_\rho \le b$. By the reasoning above, $G$ admits a $b$-orthogonal representation $H$ if and only if such a test is positive for some $\sigma_\nu$. Also, if this is true, we can construct $H$ by attaching $H_\nu$ to $H_\rho$, and since for each node $\nu$ we have used the representation $H_\nu$ with minimum number of bends among those with the same spirality, $H$ has the minimum number of bends over all $b$-orthogonal representations of $G$.

By \cref{le:P-computation,le:S-computation,le:R-computation} for each node we spend  $O(2^{p \cdot \log p}) \cdot n^{O(1)}$ time, and the condition of \Cref{le:chain} at the root level can be tested in $O(n)$ time. Since $T$ has $O(n)$ nodes and  $G$ has $O(n)$ rooted SPQ$^*$R-trees, the statement follows.\end{proof}


\section{The General Case}
\label{se:generalcase}
We assume first that $G$ is connected but not biconnected; we then consider non-connected graphs in the proof of \cref{th:generalcase}, the proof of which is in the appendix. A biconnected component of $G$ is also called a \emph{block}. There are two main difficulties in extending the result of \Cref{th:biconnected} when $G$ is not biconnected. In terms of drawability, some angle constraints may be required at the cutvertices of the input graph $G$. Namely, one cannot simply test the representability of each single block independently, as it might be impossible to merge the orthogonal representations of the different blocks into an orthogonal representation of $G$ without additional angle constraints at the cutvertices. In terms of vertex-degree, $G$ might also have degree-1 vertices and, more importantly, a vertex of degree three or four in $G$ can be a degree-2 vertex in a block; hence, the sum of the degree-2 vertices in all blocks can be larger than the number of degree-2 vertices in $G$. To manage angle constraints at the cutvertices, we adopt a variant of a strategy used in a previous work on series-parallel graphs~\cite{DBLP:conf/gd/DidimoKLO22}, and extend it to also consider rigid components. About the number of degree-2 vertices, similarly to~\cite{DBLP:journals/jcss/GiacomoLM22}, we extend the parameter $k$ to include both vertices of degree one and two, and we observe that each block of $G$ must have $O(k+b)$ degree-2 vertices if $G$ admits a $b$-orthogonal representation.

\smallskip\noindent{\bf Block-cutvertex tree.} Let $G$ be a connected graph
having $k$ vertices of degree at most two, and let $b$ be a non-negative integer. To deal with the possible planar embeddings of $G$, we use the \emph{BC-tree} (\emph{block-cutvertex tree}) of $G$, and combine it with the SPQ$^*$R-trees of its blocks. The BC-tree $\cal T$ of $G$ describes the decomposition of~$G$ in terms of its blocks and cutvertices; refer to~\Cref{fi:bc-tree}. Each node of $\cal T$ represents either a block or a cutvertex of $G$. A \emph{block-node} (resp. a \emph{cutvertex-node}) of $\cal T$ represents a block (resp. a cutvertex) of $G$. There is an edge between two nodes of $\cal T$ if one of them represents a cutvertex and the other represents a block that contains the cutvertex.
If $B_1, \dots, B_q$ are the blocks of $G$ $(q \geq 2)$, we denote by $\beta(B_i)$ the block-node of $\cal T$ corresponding to $B_i$ $(1 \leq i \leq q)$ and by ${\cal T}_{B_i}$ the tree $\cal T$ rooted at $\beta(B_i)$. For a cutvertex $c$ of $G$, we denote by $\chi(c)$ the node of $\cal T$ that corresponds to~$c$. 
Each ${\cal T}_{B_i}$ describes a class of planar embeddings of $G$ such that, for each non-root node $\beta(B_j)$ $(1 \leq j \leq q)$ with parent node $\chi(c)$ and grandparent node $\beta(B_k)$, the cutvertex $c$ and $B_k$ lie on the external face of $B_j$. 
An \emph{orthogonal representation of $G$ with respect to ${\cal T}_{B_i}$} is an orthogonal representation whose planar embedding belongs to the class described by~${\cal T}_{B_i}$. Therefore, to test whether $G$ admits a $b$-orthogonal representation we have to test if $G$ admits a $b$-orthogonal representation with respect to ${\cal T}_{B_i}$ for some $i \in \{1, \dots, q\}$. 

\smallskip\noindent{\bf Testing algorithm.} For any fixed $i \in \{1, \dots, q\}$, testing whether $G$ has a $b$-orthogonal representation with respect to ${\cal T}_{B_i}$, requires to compute a bend-minimum representation for each block $B_j$ subject to angle constraints at the cutvertices of $B_j$, so that we can merge the representation of $B_j$ with those of its adjacent blocks. For example, consider the orthogonal representation $H$ in \Cref{fi:bc-tree-a}, whose embedding is in the class described by the rooted BC-tree ${\cal T}_{B_4}$ of \Cref{fi:bc-tree-b}. $H$ has blocks $B_1, \dots B_5$ (where $B_3$ and $B_5$ are trivial blocks) and cutvertices $c$, $c'$, and $c''$. Note that the representation of $B_1$ must have $c$ on its external face with a flat angle, so to accommodate the representation of $B_3$. Also, $c'$ must have a reflex angle in the representation of $B_1$, so to accommodate the representation of $B_2$. Conversely, the two orthogonal representations in \Cref{fi:bc-tree-c} have embeddings that do not adhere to any rooted version of $\cal T$, and they cannot be merged, because neither $c$ nor $c''$ lie on the external face of their components.  A detailed description of all types of angle constraints that may be needed at the cutvertices of a block is given in the appendix.

\begin{figure}[t]
    \centering
    \subfigure[]{\includegraphics[height=0.33\columnwidth,page=1]{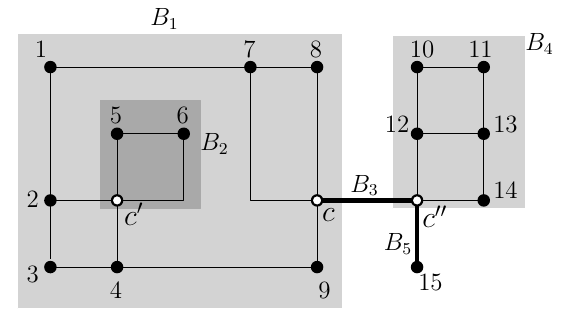}\label{fi:bc-tree-a}}
    \hfil
    \subfigure[]{\includegraphics[height=0.33\columnwidth,page=2]{bc-tree}\label{fi:bc-tree-b}}
    \hfil
    \subfigure[]{\includegraphics[height=0.36\columnwidth,page=3]{bc-tree}\label{fi:bc-tree-c}}
    
    \caption{(a) An orthogonal representation $H$ of a 1-connected graph and its blocks. (b) The BC-tree ${\cal T}_{B_4}$; the embedding of $H$ is in the class described by ${\cal T}_{B_4}$. (c) Orthogonal representations of $B_1$ and $B_4$ that cannot be glued planarly.}
 \label{fi:bc-tree}
\end{figure} 


\medskip
A \emph{$b$-constrained orthogonal representation of $B_j$} is a representation   
$H_{B_j}$ with at most $b$ bends and that fulfills the required angle constraints at the cutvertices of $G$ that belong to $B_j$. We prove the following (see appendix for details).

\begin{restatable}{lemma}{leconstrainedblock}
	\label{le:constrained-block}
	Let $G$ be a graph with $k$ vertices of degree at most two and let $b$ be a non-negative integer. Let $B_1, \dots, B_q$ be the blocks of $G$, let ${\cal T}_{B_i}$ be the BC-tree of $G$ rooted at $\beta(B_i)$ (for some $i \in \{1, \dots, q\}$), and let $B_j$ be any block of $G$ (possibly $i=j$).
	There exists an $O(2^{p \cdot \log p}) \cdot n^{O(1)}$-time  
	algorithm, with $p=k+b$, that tests whether $B_j$ admits a $b$-constrained orthogonal representation, and that computes one with minimum number of bends in the positive case.
\end{restatable}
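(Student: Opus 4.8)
The plan is to reduce the computation of a bend-minimum $b$-constrained orthogonal representation of $B_j$ to the biconnected machinery of \cref{th:biconnected}, suitably augmented to respect angle prescriptions at the cutvertices of $G$ contained in $B_j$. As a preliminary step I would bound the number of degree-2 vertices of $B_j$: a vertex of degree two in $B_j$ is either a degree-2 vertex of $G$ (at most $k$ of them) or a cutvertex of $G$ whose remaining edges leave $B_j$. By the observation that every block of a $b$-orthogonally-representable $G$ has $O(k+b)$ degree-2 vertices, I can reject $B_j$ whenever its number of degree-2 vertices exceeds $c(k+b)$ for the appropriate constant $c$, and otherwise assume $k_{B_j}=O(k+b)$. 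This is what makes \cref{le:spirbounded} applicable to $B_j$ with spirality range $O(k+b)$, and hence keeps the whole computation parameterized by $p=k+b$.

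Next I would translate the merging requirements at the cutvertices into local angle prescriptions on $B_j$. For the rooted tree ${\cal T}_{B_i}$, each cutvertex $c\in B_j$ plays one of two roles: it is the cutvertex toward the parent block (which must lie on the external face of $B_j$, together with the parent block), or it is a cutvertex toward one or more subtrees of blocks hanging below $B_j$ (each of which must reserve enough angular room, in an incident face, to host the representations of the attached blocks). A case analysis (detailed in the appendix) shows that, depending on $\deg_{B_j}(c)$, $\deg_G(c)$, and on what must be attached, each such requirement is equivalent to fixing, or lower-bounding, the orthogonal angle of $B_j$ at $c$ on the relevant incident face(s), plus the external-face condition for the parent cutvertex. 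Since every vertex has only $O(1)$ admissible orthogonal angles, there are $O(1)$ prescription types per cutvertex.

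I would then enforce these prescriptions inside the post-order dynamic program of \cref{th:biconnected} on the SPQ$^*$R-tree of $B_j$. The external-face conditions are handled for free by restricting the choice of reference chain to those rooted trees that place the parent cutvertex on the external face. The angle prescriptions are enforced node by node: a prescription at a vertex interior to a Q$^*$-chain is imposed when computing the Q$^*$-node spirality set (\cref{le:Q-computation}); a prescription at a pole of an S- or P-node is tracked by tagging each record of the spirality set with the angle realized at that pole, which multiplies the number of records of that node by a constant and preserves the bounds of \cref{le:P-computation,le:S-computation}; and a prescription at a vertex of a rigid skeleton is imposed directly in the min-cost-flow computation of \cref{le:R-computation}, since \cref{le:planebendmin} already allows one to fix vertex-angles and edge-bends of the sought representation. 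Running the augmented program over all $O(n)$ admissible rooted SPQ$^*$R-trees of $B_j$ and keeping the cheapest valid representation yields the bend-minimum $b$-constrained representation, or a rejection; the constant overhead per node keeps the time within $O(2^{p\cdot\log p})\cdot n^{O(1)}$.

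The step I expect to be the main obstacle is guaranteeing that the constraint-handling is consistent with spirality-based interchangeability. The dynamic program replaces an orthogonal component by any representative of equal spirality, and I must ensure this substitution never destroys a prescribed angle at a cutvertex interior to a component. The clean way to secure this is that all interior angle prescriptions are discharged within the R-node (flow-network) computation, where \cref{le:planebendmin} bakes them into the stored representative, so that interchangeability is only ever invoked between representatives that already satisfy the same prescriptions; poles, which are the only vertices shared across an interchange, carry their prescribed angle explicitly in the augmented record. Verifying that this refinement of the interchangeability argument of~\cite{DBLP:journals/siamcomp/BattistaLV98} goes through for every combination of cutvertex role and node type is where the bulk of the (appendix) case work lies.
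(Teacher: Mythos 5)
Your proposal follows essentially the same route as the paper's proof: bound the number of degree-2 vertices of $B_j$ by $O(k+b)$ via the cutvertices, classify the merging requirements at cutvertices into $O(1)$ angle prescriptions per vertex, and enforce them inside the SPQ$^*$R-tree dynamic program of \cref{th:biconnected} at Q$^*$-nodes, at poles, and via the flow network for R-nodes. The only place where you are lighter than the paper is the justification of the cutvertex bound, which the paper obtains by charging the reflex angles forced on the external faces of the blocks attached at each cutvertex to degree-2 vertices or bends of $G$, yielding the explicit bound $2k+b$.
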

\begin{proof}[Sketch]
    We first show that the number of degree-2 vertices of $B_j$ is at most $2k + b$ if $G$ admits a $b$-orthogonal representation with respect to ${\cal T}_{B_i}$. Then we show how to apply the algorithm of \Cref{th:biconnected} on the rooted SPQ$^*$R-trees of $B_j$, handling the constraints at the cutvertices.    
\end{proof}

\noindent For all possible rooted BC-trees ${\cal T}_{B_i}$, the testing algorithm checks if $G$ has a $b$-orthogonal representation with respect to ${\cal T}_{B_i}$: it uses \Cref{le:constrained-block} on every block $B_j$ and finally checks that the sum of the bends is at most $b$ (see appendix).

\begin{restatable}{theorem}{thgeneralcase}
\label{th:generalcase}
Let $G$ be a graph with $k$ vertices of degree at most 2, let $b$ be an integer, and let $p=k+b$. There exists an $O(2^{p \cdot \log p}) \cdot n^{O(1)}$-time algorithm 
that tests whether $G$ admits a $b$-orthogonal representation, and that computes one with the minimum number of bends in the positive case.
\end{restatable}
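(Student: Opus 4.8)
The plan is to lift the biconnected algorithm of \cref{th:biconnected}, packaged as the per-block routine of \cref{le:constrained-block}, to arbitrary graphs by routing connectivity through the BC-tree and by reducing disconnected instances to their connected components. I would first treat the case in which $G$ is connected. Since any planar embedding of $G$ places some block on the external face, and the rooted BC-trees ${\cal T}_{B_1},\dots,{\cal T}_{B_q}$ jointly describe all such embeddings, it suffices to test, for each of the $q=O(n)$ rootings, whether $G$ admits a $b$-orthogonal representation with respect to ${\cal T}_{B_i}$ of minimum total bends, and then to take the best outcome over all $i$.

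Fix a rooting ${\cal T}_{B_i}$. The nesting of blocks is then determined, and the only freedom left is how the $360^\circ$ around each cutvertex is shared among its incident blocks. Because every vertex has degree at most $4$, there are only a constant number of admissible angle splits at each cutvertex, so I would run a bottom-up dynamic program over ${\cal T}_{B_i}$ in which the state at a cutvertex-node records its (constantly many) admissible splits together with the minimum number of bends accumulated in the subtree below it. At a block-node $B_j$ the program invokes \cref{le:constrained-block} to compute a bend-minimum $b$-constrained representation of $B_j$ honouring the angle constraints imposed at its cutvertices, and combines the resulting cost with the cost tables already computed for the child cutvertices of $B_j$ to build the table of $B_j$. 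The value at the root of ${\cal T}_{B_i}$ is the minimum total number of bends of a representation respecting that rooting, the instance is accepted iff this value is at most $b$ for some $i$, and the optimal representation is assembled by gluing the per-block representations along their shared cutvertices.

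Once the connected case is settled, the disconnected case is immediate. Distinct connected components of $G$ are vertex-disjoint and can be drawn in disjoint regions of the plane, so neither the planarity nor the bend count of one component constrains another; hence the minimum number of bends of $G$ equals the sum of the minima of its components (with isolated vertices and single-edge components contributing none). I would therefore apply the connected-case algorithm to each component, add the resulting minima, and accept iff the sum is at most $b$. For the running time, \cref{le:constrained-block} costs $O(2^{p\log p})\cdot n^{O(1)}$ per call with $p=k+b$; arranging the dynamic program so that it performs only $n^{O(1)}$ such calls in total (over all blocks, cutvertex configurations, and rootings) yields the overall bound $O(2^{p\log p})\cdot n^{O(1)}$ as claimed.

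I expect the main obstacle to be the faithful treatment of the angle constraints at cutvertices: I must classify exactly which flat and reflex angle requirements a rooting imposes on each block, argue that gluing the locally bend-minimum block representations along shared cutvertices always produces a globally valid planar orthogonal representation respecting ${\cal T}_{B_i}$, and conversely that every such global representation arises this way, so that minimizing bends block by block is sound and complete. The coordination must moreover keep the number of constrained block-solves polynomial despite the combinatorics of the cutvertex angle splits. A related delicate point, already isolated in \cref{le:constrained-block}, is that a vertex of degree three or four in $G$ may have degree two inside a block, so the number of degree-$2$ vertices of a block must be shown to be $O(k+b)$ for the parameter to remain $k+b$; I would rely on \cref{le:constrained-block} for this bound and concentrate the formal work on the cutvertex analysis and the correctness of the gluing.
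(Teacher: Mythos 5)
Your proposal matches the paper's proof in all essentials: enumerate the $O(n)$ rootings ${\cal T}_{B_i}$ of the BC-tree, solve each block with its cutvertex angle constraints via \cref{le:constrained-block}, accept iff the bends sum to at most $b$ for some rooting, and handle a disconnected $G$ by summing over its connected components. The only difference is that your bottom-up dynamic program with state tables for cutvertex angle splits is more machinery than the paper needs: once the rooting is fixed, the angle constraint imposed on each block at each cutvertex is determined purely by the degree configuration (as classified in the appendix), so the blocks decouple and one simply solves each independently and adds up the bend counts.
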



\begin{corollary}
Let $G$ be a graph with $k$ vertices of degree at most two. There exists an $O(2^{k \cdot \log k}) \cdot n^{O(1)}$-time algorithm 
that tests whether $G$ admits a rectilinear representation, and that computes one in the positive case.
\end{corollary}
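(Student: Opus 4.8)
The plan is to obtain the statement as the special case $b=0$ of \Cref{th:generalcase}. First I would recall from the preliminaries that a rectilinear representation is, by definition, an orthogonal representation with no bends; equivalently, $G$ admits a rectilinear representation if and only if it admits a $0$-orthogonal representation. Hence the testing problem for rectilinear representations coincides verbatim with the $b=0$ instance of the bend-minimum orthogonal representation problem handled by \Cref{th:generalcase}, and no separate machinery is needed.

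Second, I would invoke \Cref{th:generalcase} on the input graph $G$ with the parameter fixed to $b=0$. The theorem yields an algorithm that tests whether $G$ admits a $0$-orthogonal representation and, in the positive case, computes one with the minimum number of bends. Since the minimum is taken over $0$-orthogonal representations, any representation returned necessarily has zero bends and is therefore a rectilinear representation, which is exactly the object required by the corollary. Thus correctness transfers immediately from \Cref{th:generalcase}.

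Finally, I would track the running time. Substituting $p = k + b = k + 0 = k$ into the bound $O(2^{p \cdot \log p}) \cdot n^{O(1)}$ of \Cref{th:generalcase} yields $O(2^{k \cdot \log k}) \cdot n^{O(1)}$, as claimed. There is no genuine obstacle here: the corollary is a direct specialization, and the only point to verify is that the notion of rectilinear representation aligns precisely with the $b=0$ case, which holds by definition. All the substantive work—the treatment of R-nodes via \Cref{le:R-computation}, the handling of angle constraints at cutvertices through \Cref{le:constrained-block}, and the reduction of the disconnected case—has already been absorbed into the proof of \Cref{th:generalcase}.
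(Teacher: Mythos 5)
Your proposal is correct and matches the paper's intent exactly: the corollary is stated as an immediate consequence of \Cref{th:generalcase}, obtained by setting $b=0$ and substituting $p=k$ into the running-time bound, with the identification of rectilinear representations and $0$-orthogonal representations holding by definition from the preliminaries. No further commentary is needed.
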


\bibliographystyle{splncs04}
\bibliography{bibliography}

\clearpage
\appendix

\section{Additional Material for \Cref{se:bico}}\label{se:app-bico}

\leRcomputation*
\begin{proof}
For every target value of spirality $\sigma_\nu \in [-k-b-2,k+b+2]$, we compute $X_\nu(\sigma_\nu)$ as follows. A \emph{valid tuple} is a tuple $(\sigma_1,...,\sigma_d)$ such that $\sum_{i=1}^db_i\le b$, where $X_{\mu_i}(\sigma_i)=(b_i,H_i)$ for every $i\in \{1,\dots, d\}$. Let $u$ and $v$ be the poles of $\nu$. For each of the two possible embeddings of $\skel(\nu)$, consider the component $G_\nu$ enhanced with a dummy edge $e$, connecting the poles $u$ and $v$ of $\nu$, in such a way that $e$ is on the external face of $G_\nu$ and to the right of $G_\nu$. Let $G'$ be the resulting graph.
We have four cases depending on the number of alias vertices of $u$ and $v$: $(a)$~$|A^u|=|A^v|=1$; $(b)$~$|A^u|=1$ and $|A^v|=2$; $(c)$~$|A^u|=2$ and $|A^v|=1$;  $(d)$~$|A^u|=|A^v|=2$.

\begin{itemize}
\item \texttt{Case (a).} For each valid tuple, we perform the following procedure. Let $H_e$ be an orthogonal representation of $e$ such that $n(H_e^{uv})=\sigma_\nu-4$, where $H_e^{uv}$ is the turn number of $H_e$ going from $u$ to $v$. This means that $H_e^{uv}$ has $\sigma_\nu-4$ bends, each turning to the left if $\sigma_\nu-4<0$, or each turning to the right if $\sigma_\nu-4>0$. Also, let $J=e \cup G_{\mu_1}\cup...\cup G_{\mu_d}$ and let $H_J$ be an orthogonal representation of the graph $J$ such that: $(i)$ the restriction of $H_J$ to $G_{\mu_i}$ coincides with $H_{\mu_i}$ for each $i\in \{1,\dots, d\}$; and $(ii)$ the restriction of $H_J$ to $e$ coincides with $H_e$. We compute an $H_J$-constrained orthogonal representation $H'$ of $G'$ with the minimum number of bends in polynomial time by \cref{le:planebendmin}, if it exists.  Consider an SPQ$^*$R-tree $T'$ of $G'$ rooted at the Q$^*$-component $\xi$ representing $e$. 
Note that $\nu$ is the root child of $T'$.
Hence, by \cref{le:chain}, if $H'$ exists then its restriction $H_\nu$ to $G_\nu$ has spirality $\sigma_\nu$. By \cref{le:planebendmin}, $H_\nu$ is a bend-minimum orthogonal representation of $G_\nu$ having spirality $\sigma_\nu$ and the spirality of the restriction of $H_\nu$ to $G_{\mu_i}$ is $\sigma_i$, for each $i\in \{1,\dots, d\}$. 

\item \texttt{Case (b).} We add vertex $v'$ and edge $e_v=(v,v')$ to $G'$ so that $e$ is the edge after $e_v$ in the clockwise order of the edges incident to $v$. We proceed similarly as we did in Case~$(a)$. For each valid tuple, we perform the following procedure. Let $H_e$ be an orthogonal representation of $e$ such that $n(H_e^{uv})=\sigma_\nu-\frac{7}{2}$, where $H_e^{uv}$ is the turn number of $H_e$ going from $u$ to $v$. Observe that, since $\sigma_\nu$ is a semi-integer number, $n(H_e^{uv})$ is an integer number. Let $J=e \cup G_{\mu_1}\cup...\cup G_{\mu_d}$ and let $H_J$ be an orthogonal representation of the graph $J$ such that: $(i)$ the restriction of $H_J$ to $G_{\mu_i}$ coincides with $H_{\mu_i}$ for each $i\in \{1,\dots, d\}$; and $(ii)$ the restriction of $H_J$ to $e$ coincides with $H_e$. We compute an $H_J$-constrained orthogonal representation $H'$ of $G'$ with the minimum number of bends in polynomial time by \cref{le:planebendmin}, if it exists.  Consider an SPQ$^*$R-tree $T'$ of $G'$ rooted at the Q$^*$-component $\xi$ representing $e$.  Suppose that $H'$ exists. We can assume that $e_v$ has 0 bends in $H'$, since it is an edge incident to a degree-1 vertex. Let $H''$ be the orthogonal representation obtained from $H'$ removing $e_v$. By \cref{le:chain}, the restriction of $H''$ to $G_\nu$ has spirality $\sigma_\nu+\frac{1}{2}$. By definition of spirality and by construction of $G'$, the restriction $H_\nu$ of $H'$ to $G_\nu$ has spirality $\sigma_\nu$. By \cref{le:planebendmin}, $H_\nu$ is a bend-minimum orthogonal representation of $G_\nu$ having spirality $\sigma_\nu$ and the spirality of the restriction of $H_\nu$ to $G_{\mu_i}$ is $\sigma_i$, for each $i\in \{1,\dots, d\}$. 

\item \texttt{Case (c).} Symmetric to Case (b).

\item \texttt{Case (d).} We add vertex $v'$ and edge $e_v$ to $G'$ as in Case~(b). Also, we add vertex $u'$ and edge $e_u=(u,u')$ to $G'$ so that $e$ is the edge before $e_u$ in the clockwise order of the edges incident to $u$. For each valid tuple, we perform the following procedure. Let $H_e$ be an orthogonal representation of $e$ such that $n(H_e^{uv})=\sigma_\nu-1$, where $H_e^{uv}$ is the turn number of $H_e$ going from $u$ to $v$. Let $J=e \cup G_{\mu_1}\cup...\cup G_{\mu_d}$ and let $H_J$ be an orthogonal representation of the graph $J$ such that: $(i)$ the restriction of $H_J$ to $G_{\mu_i}$ coincides with $H_{\mu_i}$ for each $i\in \{1,\dots, d\}$; and $(ii)$ the restriction of $H_J$ to $e$ coincides with $H_e$. We compute an $H_J$-constrained orthogonal representation $H'$ of $G'$ with the minimum number of bends in polynomial time by \cref{le:planebendmin}, if it exists.  Consider an SPQ$^*$R-tree $T'$ of $G'$ rooted at the Q$^*$-component $\xi$ representing $e$.  Suppose that $H'$ exists. We can assume that $e_v$ and $e_u$ have 0 bends in $H'$. Let $H''$ be the orthogonal representation obtained from $H'$ removing $e_v$ and $e_u$. By \cref{le:chain}, the restriction of $H''$ to $G_\nu$ has spirality $\sigma_\nu+1$. By definition of spirality and by construction of $G'$, the restriction $H_\nu$ of $H'$ to $G_\nu$ has spirality $\sigma_\nu$. By \cref{le:planebendmin}, $H_\nu$ is a bend-minimum orthogonal representation of $G_\nu$ having spirality $\sigma_\nu$ and the spirality of the restriction of $H_\nu$ to $G_{\mu_i}$ is $\sigma_i$, for each $i\in \{1,\dots, d\}$. 
\end{itemize}

Denote by $b_\nu$ the number of bends of $H_\nu$. 
If $b_\nu > b$ for each of the two planar embeddings of $\skel(\nu)$, then $X_\nu(\sigma_\nu)=(\infty,\emptyset)$ and we do not insert $(\sigma_\nu,X_\nu(\sigma_\nu))$ in the spirality set $\Sigma_\nu$. Else, for the two embeddings of $\skel(\nu)$ we retain the representation $H_\nu$ of minimum cost $b_\nu$, set $X_\nu(\sigma_\nu)=(b_\nu,H_\nu)$, and insert $(\sigma_\nu,X_\nu(\sigma_\nu))$ in $\Sigma_\nu$. 

\smallskip \noindent\textsf{Correctness.} The correctness of the procedure above follows by these facts: $(i)$~to construct a bend-minimum representation $H_\nu$ with spirality $\sigma_\nu$, we consider all possible combinations of values of spiralities for $G_{\mu_1}, \dots, G_{\mu_d}$; $(ii)$~thanks to the interchangeability of orthogonal components with the same spirality, for each such combination we aim to construct $H_\nu$ in such a way that it contains a minimum-bend representation of each $G_{\mu_i}$ with its target value of spirality; $(iii)$~if $b_\nu \leq b$, the spirality determined by $H_\nu$ for each child $\mu$ of $\nu$ that corresponds to a Q$^*$-component is surely in the spirality set of $\mu$; $(iv)$~By \cref{le:spirbounded} it suffices to test the existence of an $H_\nu$ for each target value of spirality $\sigma_\nu\in [-k-b-2,k+b+2]$.

\smallskip \noindent\textsf{Time-complexity.} By \cref{le:planebendmin}, for each tuple the time required for the computation is $O(n^\frac{7}{4}\sqrt{\log n})$. By \cref{le:spirbounded}, we consider $O(d^{k+b})$ spirality values.
Moreover, we can assume that $d\le k+b$, otherwise the instance can be rejected. More precisely, since $G_{\mu_i}$ ($i\in \{1,\dots, d\}$) contains at least one cycle, any orthogonal representation of $G_{\mu_i}$ requires four $270^\circ$ (i.e., reflex) angles on the external face, and at most two of these angles can occur at the poles of $G_{\mu_i}$. Since each reflex angle that does not occur at a pole of $G_{\mu_i}$ requires either a degree-2 vertex or a bend, we necessarily have $d \leq k+b$ if $H_\nu$ exists. 
Hence, there are $O(d^{k+b})=O((k+b)^{k+b})=O(2^{(k+b) \cdot \log(k+b)})$ valid tuples.
\end{proof}

\section{Additional Material for \Cref{se:generalcase}}\label{se:app-general}

\noindent{\bf Types of angle constraints.} As mentioned in \Cref{se:generalcase}, for any fixed $i \in \{1, \dots, q\}$,  
testing whether $G$ has a $b$-orthogonal representation with respect to ${\cal T}_{B_i}$, requires to compute a bend-minimum representation for each block $B_j$ subject to angle constraints at the cutvertices of $B_j$, so that we can merge the representation of $B_j$ with those of the blocks adjacent to $B_j$.
In the following we describe the types of constraints that we must consider for the cutvertices of $G$ that belong to $B_j$ 
If $B_j \neq B_i$ ($\beta(B_j)$ is not the root of ${\cal T}_{B_i}$), we denote by $c$ the cutvertex of $B_j$ for which $\chi(c)$ is the parent of $\beta(B_j)$. Also, we always denote by $c'$ any cutvertex such that $\chi(c')$ is a child of~$\beta(B_j)$. Finally, let $\deg(c|B_j)$ and $\deg(c)$ (resp. $\deg(c'|B_j)$ and $\deg(c')$) be the degree of $c$ (resp. $c'$) in $B_j$ and the degree of $c$ (resp. $c'$) in $G$. 

\begin{itemize}
\item 
If $\deg(c'|B_j)=2$ and $\deg(c')=4$, there are two cases: $(i)$ $\deg(c'|B_k)=2$ for some child $\beta(B_k)$ of $\chi(c')$ (\Cref{fi:angle-constraint-a}). In this case, we need to impose on $c'$ a \emph{reflex-angle constraint}, requiring a reflex angle (270$^\circ$) at $c'$ in the orthogonal representation of $B_j$; it guarantees that we can glue the representations of $B_j$ and of $B_k$ at $c'$.  $(ii)$ In all other cases no constraint is needed for $c'$.

\item 
If $\deg(c|B_j) = 1$, then $B_j$ is a trivial block (i.e., a single edge) and no constraints are needed for $c$. 

\item
If $\deg(c|B_j) \geq 2$, let $B_k$ be the block such that $\beta(B_k)$ is the parent of $\chi(c)$: 
	$(i)$ If $\deg(c)=4$ and $\deg(c|B_k)=\deg(c|B_j)=2$ (\Cref{fi:angle-constraint-b}), then we impose an \emph{external reflex-angle constraint} on $c$, which forces $c$ to have a reflex angle on the external face $f$; block $B_k$ will be embedded on $f$.
	$(ii)$ If $\deg(c)=4$ with $\deg(c|B_k)=1$ (i.e., $B_k$ is a trivial block) and $\deg(c|B_j)=2$, then $\beta(B_j)$ has a sibling $\beta(B_h)$, where $B_h$ is a trivial block (\Cref{fi:angle-constraint-c}). In this case, we impose an \emph{external non-right-angle constraint} on $c$, which forces $c$ to have an angle larger than $90^\circ$ (either a flat or a reflex angle) on the external face~$f$; block $B_k$ will be embedded on~$f$, while $B_h$ will be embedded either on~$f$ (if $c$ will have a reflex angle in~$f$) or on the other face of $B_j$ incident to $c$ (if~$c$ will have a flat angle in~$f$).
	$(iii)$ If $\deg(c)=4$ with $\deg(c|B_k)=1$ and $\deg(c|B_j)=3$ (\Cref{fi:angle-constraint-d}), we impose an \emph{external flat-angle constraint} on~$c$, which forces $c$ to have its unique flat angle in the external face~$f$; again, $B_k$ will be embedded on~$f$.
	$(iv)$ If $\deg(c)=3$ and $\deg(c|B_j)=2$ (hence $\deg(c|B_k)=1$), we impose an external non-right-angle constraint on $c$, as in case~$(ii)$.
\end{itemize}

\begin{figure}[h]
    \centering
    \subfigure[]{\includegraphics[height=0.18\columnwidth,page=1]{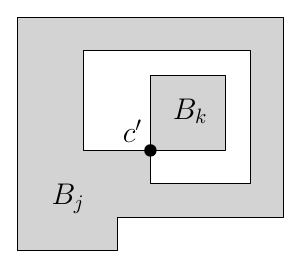}\label{fi:angle-constraint-a}}
    \hfil
    \subfigure[]{\includegraphics[height=0.18\columnwidth,page=2]{angle-constraints}\label{fi:angle-constraint-b}}
    \hfil
    \subfigure[]{\includegraphics[height=0.18\columnwidth,page=3]{angle-constraints}\label{fi:angle-constraint-c}}
    \hfil
    \subfigure[]{\includegraphics[height=0.18\columnwidth,page=4]{angle-constraints}\label{fi:angle-constraint-d}}
    \caption{Schematic illustration of cases that require angle constraints.}
 \label{fi:angle-constraints}
\end{figure}

\leconstrainedblock*
\begin{proof}
	A degree-2 vertex of $B_j$ corresponds either to a degree-2 vertex of $G$, or to a degree-3 or degree-4 cutvertex of $G$. Also, if $c$ is a cutvertex of $G$, either one of the blocks that contain $c$ has a degree-1 vertex or all blocks that contain $c$ have a cycle, which implies that, in any orthogonal representation of $G$, all these blocks have at least 3 reflex angles corresponding to either degree-2 vertices or bends. Hence, if $G$ admits a $b$-orthogonal representation with respect to ${\cal T}_{B_i}$,  
	the number of cutvertices of $G$ that belong to $B_j$ is at most $k+b$, which implies that the number of degree-2 vertices of $B_j$ is at most $p' = 2k + b$. Therefore, if $B_j$ contains more than $p'$ degree-2 vertices, we immediately conclude that the test is negative, as $B_j$ cannot admit a $b$-constrained orthogonal representation. 
	
    Suppose then that $B_j$ contains at most $p'$ degree-2 vertices. To continue the test, we apply on $B_j$ the same dynamic programming algorithm of \Cref{th:biconnected} based on the rooted SPQ$^*$R-trees of $B_j$, but we specialize this algorithm to take into account the required angle constraints at the cutvertices when every rooted SPQ$^*$R-trees of $B_j$ is visited. In the following we describe how to handle each type of constraint during a visit of an SPQ$^*$R-tree, where again $c$ denotes the cutvertex that must lie on the external face of $B_j$ (if $B_j \neq B_i$) and $c'$ denotes any other cutvertex of $G$ that belongs to~$B_j$. Note that, if $B_j \neq B_i$, we only need to consider those SPQ$^*$R-trees of $B_j$ rooted at reference chains that contain $c$, as we must restrict to those planar embeddings of $B_j$ with $c$ on the external face. 
	
	\begin{itemize}
		\item {\em Reflex-angle constraint at $c'$}. In this case $c'$ has degree two in $B_j$, thus $c'$ belongs to a maximal chain of edges represented by a Q$^*$-node $\nu$. It is enough to consider those spirality values $\sigma_\nu$ of $G_\nu$ that are compatible with a left/right turn at $c'$ and, for such a value of spirality, consider a representative $H_\nu$ having a left/right turn at $c'$.
		
		\item {\em External reflex-angle constraint at $c$}. Also in this case $c$ has degree two in $B_j$, and $c$ belongs to a maximal chain of edges represented by a Q$^*$-node $\rho$. In this case the SPQ$^*$-tree rooted at $\rho$ is the only one that we need to visit, thus $c$ belongs to the reference chain and is distinct from the poles $u$ and $v$ of $\rho$. Again, when we evaluate the condition of \Cref{le:chain} at the root level, it suffices to restrict to those representations $H^{uv}_\rho$ with a left turn at $c$.
		
		\item {\em External flat-angle constraint at $c$}. In this case $c$ has degree three in $B_j$, and we visit three possible SPQ$^*$R-trees, one for each of the three maximal chains having $c$ as one of its two end-vertices. In each of these SPQ$^*$R-trees, the root child $\nu$ is either a P-node of degree two or an R-node, and $c$ is one of the two poles of $\nu$. If $\nu$ is a P-node, for each target spirality value $\sigma_\nu$, one can just consider the subset of combinations of the angles at $c$ that guarantee a flat angle at the pole $c$ in the external face of $G_\nu$  when construct $H_\nu$ (see~\cite{DBLP:conf/gd/DidimoKLO22} for details). If $\nu$ is an R-node, it is enough to force a flat angle at $c$ on the external face of the rigid component in the test of \Cref{le:R-computation}; this is done by fixing the value of the flow on an arc of the flow-network.
		
		\item {\em External non-right-angle constraint at $c$}. In this case $c$ has degree two in~$B_j$. Hence, similarly to the external reflex-angle constraint, $c$ belongs to a maximal chain of edges represented by a Q$^*$-node $\rho$, and we only need to visit the SPQ$^*$R-tree rooted at $\rho$. When we evaluate the condition of \Cref{le:chain} at the root level, it is enough to only consider those representations $H^{uv}_\rho$ that have either a left turn at $c$ or no turn at $c$.
	\end{itemize}  
	
	All the described modifications of the testing algorithm in \Cref{th:biconnected} still take $O(p')$ time. Since $p' = O(k+b)$, the statement follows. 
\end{proof}

\thgeneralcase*
\begin{proof}
Suppose first that $G$ is 1-connected. Let $B_1, \dots, B_q$ be the blocks of $G$. For any given rooted BC-tree ${\cal T}_{B_i}$, we test whether $G$ admits a $b$-orthogonal representation with respect to ${\cal T}_{B_i}$. This requires to test for each block $B_j$ whether $B_j$ admits a $b$-constrained representation. This test is done with the algorithm of \Cref{le:constrained-block} and, if it is positive, the same algorithm computes one of these representations, call it $H_j$, having the minimum number of bends; denote by $b_j$ the number of bends of $H_j$. If $\sum_{j=1, \dots, q}b_j \leq b$ then the whole test is positive, and we can obtain a $b$-constrained representation of $G$ by merging all the $H_j$ at the cutvertices. Otherwise, we discard ${\cal T}_{B_i}$. We can execute this test for all rooted BC-trees, and return the $b$-orthogonal representation of $G$ having the minimum number of bends among those for which the test was positive (if any). If the test for each rooted BC-tree was negative, then there is no a $b$-orthogonal representation of $G$.  
Since $q=O(n)$, the time complexity immediately follows by \Cref{le:constrained-block}. 

Finally, if $G$ is not connected we apply the above procedure to each connected component of $G$ and test if the sum of the number of bends in the components exceeds $b$. Since the total number of blocks of $G$ is $O(n)$, by an analogous reasoning as above we obtain the same bound on the time complexity.
%
\end{proof}

\end{document}